\documentclass[runningheads,envcountsame]{llncs}
\usepackage{mysty}
\usepackage{mydraft-empty}

\usepackage{thm-restate}

\title{Logic and Languages of Higher-Dimensional Automata}
\author{Amazigh Amrane\inst1, Hugo Bazille \inst1, Uli Fahrenberg \inst1 \and Marie Fortin\inst2}
\institute{EPITA Research Laboratory (LRE), Paris, France \and Université Paris Cité, CNRS, IRIF, France}

\begin{document}
	
	\maketitle

	\begin{abstract}
		In this paper we study finite higher-dimensional automata (HDAs) from the logical point of view.
		Languages of HDAs are sets of finite bounded-width interval pomsets with interfaces ($\iiPoms_{\le k}$) closed under order extension.
		We prove that languages of HDAs are  MSO-definable. 
		For the converse, we show that the order extensions of  MSO-definable sets of $\iiPoms_{\le k}$ are languages of HDAs.
		As a consequence, unlike the case of all pomsets, order extension of MSO-definable sets of $\iiPoms_{\le k}$ is also MSO-definable.
	\end{abstract}
	
	\aain{TODO: 
		\begin{itemize}
			\item Check corner cases : $\Id$ and $\epsilon$, $\Omega_{\le k}^*$ or $\Omega_{\le k}^+$
			\item Be more precise: the MSO theory of  $\iiPoms_{\le k}$ is decidable using Kleene automata. HDAs are useful when the formula accepts a language !
			\item What about Courcelle ?
		\end{itemize}
	}

	\newpage
	\section{Introduction}

Connections between logic and automata play a key role in several areas of 
theoretical computer science -- logic being used to specify the behaviours of
automata models in formal verification, and  automata being used to prove the 
decidability of various logics.
The first and most well-known result of this kind is the equivalence in expressive
power of finite automata and monadic second-order logic (MSO) over finite 
words, proved independently by B\"uchi \cite{Buechi60}, Elgot \cite{Elgot1961} and
Trakhtenbrot \cite{Trakhtenbrot62} in the 60's.
This was soon extended to infinite words \cite{Buchi1962} as well as finite and
infinite trees \cite{ThatcherW68,Rabin1969,Doner70}.

Finite automata over words are a simple model of sequential systems with a finite
memory, each word accepted by the automaton corresponding to an execution
of the system.
For concurrent systems, executions may be represented as \emph{pomsets} (partially ordered sets). 
Several classes of pomsets and matching automata models have been defined
in the literature, corresponding to different communication models or different
views of concurrency.
In that setting, logical characterisations of classes of automata in the spirit of the
B\"uchi-Elgot-Trakhtenbrot theorem have been obtained for several cases, such as
asynchronous automata and Mazurkiewicz traces \cite{Zielonka87,tho90traces}, 
branching automata and series-parallel pomsets \cite{Kuske00, Bedon15}, step transition systems  and local trace languages \cite{DBLP:conf/apn/FanchonM09,DBLP:conf/concur/KuskeM00},
or communicating finite-state machines and message sequence charts
\cite{GKM06}.

\emph{Higher-dimensional automata (HDAs)} \cite{Pratt91-geometry,Glabbeek91-hda} 
are another automaton-based model of concurrent systems that matches more closely 
an interval-based view of events.
Initially studied from a geometrical or categorical point of view, the language
theory of HDAs has become another focus for research in the past few years
\cite{DBLP:journals/mscs/FahrenbergJSZ21}.
The language of an HDA is defined as a set of \emph{interval pomsets with
interfaces (interval ipomsets)}~\cite{DBLP:journals/iandc/FahrenbergJSZ22}.
The idea is that each event in the execution of an HDA corresponds to an
interval of time where some process is active.
In addition, if we shorten some intervals
in one possible behaviour of the HDA, we obtain another valid behaviour
for the HDA. In terms of pomsets, this means that the language of an HDA
is closed under \emph{subsumption} (expanding the partial order). 
In addition (for finite HDAs), it also has bounded width, meaning that each set of
pairwise concurrent events has size at most $k$ for some $k$.

Several theorems of classical automata theory have already been extended
to HDAs, including a Kleene theorem \cite{DBLP:conf/concur/FahrenbergJSZ22}
and a Myhill-Nerode theorem~\cite{DBLP:journals/corr/abs-2210-08298}.
The closure properties of HDAs were also studied in \cite{amrane.23.ictac}.
In particular, regular languages are not closed under complement,
but they are closed under
\emph{bounded width complement}: the subsumption closure of the complement
of the language restricted to interval ipomsets of bounded width.
In this paper, we explore the relationship between HDAs and MSO.
We prove that a set of interval ipomsets is regular if and only if
it is simultaneously MSO-definable, of bounded width, and  downward-closed
for subsumption. The latter two assumptions are necessary as it is possible
to define in MSO sets with unbounded width or sets that are not downward-closed.

The HDA-to-MSO direction is proved similarly to the original B\"uchi-Elgot-Trakhtenbrot theorem.
We use one second-order variable for each \emph{upstep}
(starting events) or \emph{downstep} (terminating events) of the HDA.
The main difference with words is that each upstep or downstep 
involves several events. 
We rely on the existence of a canonical \emph{sparse
step decomposition} for any interval ipomset. 
Intuitively, we prove that this decomposition can be ``defined'' in MSO.

On the other hand, the usual approach for the MSO-to-automata direction,
which works by induction and relies on the closure properties of regular
languages,
does not work for HDAs, as they
are not closed under complement. 
One could try to use the bounded-width
complement instead, but the downward closures present some difficulties.
Instead, we rely on a known connection \cite{amrane.23.ictac} between regular
languages of interval ipomsets and regular languages of 
\emph{step decompositions}. 
A step decomposition of an ipomset $P$ is a
sequence of discrete ipomsets (that is, pomsets where all events are concurrent)
such that their \emph{gluing composition} is equal to $P$.
We prove that for every MSO-definable language $L$ of width at most $k$,
the language of all step decompositions of ipomsets in $L$, viewed as words
over a finite alphabet of discrete ipomsets,
is regular. To do so, we give a translation from MSO formulas over ipomsets
to MSO formulas over words with this new alphabet.
It was shown in \cite{amrane.23.ictac} that the downward closure of $L$
is then regular.

The paper is organised as follows. Interval pomsets with interfaces and step
decompositions are defined in Section~\ref{sec:iipoms}, and higher-dimensional
automata in Section~\ref{sec:HDAs}. In Section~\ref{sec:MSO}, we introduce
monadic second-order logic and state our main result.
Section~\ref{sec:MSO-to-HDAs} gives the proof for the MSO-to-HDA direction,
and Section~\ref{sec:HDAs-to-MSO} for the HDA-to-MSO one.
Missing proofs can be found in the appendix.

     \section{Pomsets with Interfaces}\label{sec:iipoms}

We fix a finite alphabet $\Sigma$ throughout this paper.
A \emph{pomset with interfaces}, or \emph{ipomset}, is a structure
$(P, {<}, {\evord}, S, T, \lambda)$ comprising
a finite set $P$,
a (strict) partial order\footnote{%
	A strict pseudo-order is a relation which is irreflexive and asymmetric.
	It is a strict partial order if it is also transitive.
	We will omit the qualifier ``strict''.}
${<}\subseteq P\times P$ called the \emph{precedence order},
a pseudo-order ${\evord}\subseteq P\times P$ called the \emph{event order},
subsets $S, T\subseteq P$ called \emph{source} and \emph{target} sets, and
a \emph{labelling} $\lambda: P\to \Sigma$.
We require the following properties:
\begin{itemize}
	\item for all $e\ne e'\in P$, exactly one of $e<e'$, $e'<e$, $e\evord e'$, or $e'\evord e$ holds;
	\item for all $e_1\in S$, $e_2\in P$, and $e_3\in T$, $e_2\not< e_1$ and $e_3\not< e_2$.
\end{itemize}
That is, all points in $P$ are related by precisely one of the orders,
sources are \mbox{$<$-}minimal, and targets  are $<$-maximal.
We may add subscripts ``${}_P$'' to the elements above if necessary.

Ipomsets are a generalisation of standard pomsets (see for example \cite{DBLP:journals/fuin/Grabowski81})
obtained by adding interfaces and event order.
Both are needed in order to properly connect them with HDAs, see \cite{DBLP:journals/mscs/FahrenbergJSZ21}.
In particular, event order is necessary in order to define gluing composition, see below.
In \cite{DBLP:journals/mscs/FahrenbergJSZ21} and other works,
a transitively closed event order is used instead of the pseudo-order we use here;
we find it more convenient to use the non-transitive version which otherwise is equivalent.

An ipomset $P$ is a \emph{word} (with interfaces) if $<$ is total
and \emph{discrete} if ${<}=\emptyset$ (then $\evord$ is total).
$P$ is a \emph{pomset} if $S=T=\emptyset$,
a \emph{conclist} (short for ``concurrency list'') if it is a discrete pomset,
a \emph{starter} if it is discrete and $T=P$,
a \emph{terminator} if it is discrete and $S=P$, and
an \emph{identity} if it is both a starter and a terminator.
The source and target \emph{interfaces} of $P$
are the conclists $S_P=(S, {\evord}\rest{S\times S}, \lambda\rest{S})$
and $T_P=(T, {\evord}\rest{T\times T}, \lambda\rest{T})$,
where~``${}\rest{}$'' denotes restriction.

\begin{figure}[tbp]
	\centering
	\begin{tikzpicture}[x=1cm]
		\def\possh{-1.3}
		\begin{scope}[shift={(8,0)}]
			\def\hw{0.3}
			\filldraw[fill=green!50!white,-](0,1.2)--(1.2,1.2)--(1.2,1.2+\hw)--(0,1.2+\hw);
			\filldraw[fill=pink!50!white,-](1.3,0.7)--(1.9,0.7)--(1.9,0.7+\hw)--(1.3,0.7+\hw)--(1.3,0.7);
			\filldraw[fill=blue!20!white,-](0.5,0.2)--(1.7,0.2)--(1.7,0.2+\hw)--(0.5,0.2+\hw)--(0.5,0.2);
			\draw[thick,-](0,0)--(0,1.7);
			\draw[thick,-](2.2,0)--(2.2,1.7);
			\node at (0.6,1.2+\hw*0.5) {$a$};
			\node at (1.6,0.7+\hw*0.5) {$b$};
			\node at (1.1,0.2+\hw*0.5) {$c$};
		\end{scope}
		\begin{scope}[shift={(8,\possh)}]
			\node (a) at (0.4,0.7) {$\ibullet a$};
			\node (c) at (0.4,-0.7) {$c$};
			\node (b) at (1.8,0) {$b$};
			\path (a) edge (b);
			\path[densely dashed, gray]  (b) edge (c) (a) edge (c);
		\end{scope}
		\begin{scope}[shift={(4,0)}]
			\def\hw{0.3}
			\filldraw[fill=green!50!white,-](0,1.2)--(1.2,1.2)--(1.2,1.2+\hw)--(0,1.2+\hw);
			\filldraw[fill=pink!50!white,-](1.3,0.7)--(1.9,0.7)--(1.9,0.7+\hw)--(1.3,0.7+\hw)--(1.3,0.7);
			\filldraw[fill=blue!20!white,-](0.5,0.2)--(1.1,0.2)--(1.1,0.2+\hw)--(0.5,0.2+\hw)--(0.5,0.2);
			\draw[thick,-](0,0)--(0,1.7);
			\draw[thick,-](2.2,0)--(2.2,1.7);
			\node at (0.6,1.2+\hw*0.5) {$a$};
			\node at (1.6,0.7+\hw*0.5) {$b$};
			\node at (0.8,0.2+\hw*0.5) {$c$};
		\end{scope}
		\begin{scope}[shift={(4,\possh)}]
			\node (a) at (0.4,0.7) {$\ibullet a$};
			\node (c) at (0.4,-0.7) {$c$};
			\node (b) at (1.8,0) {$b$};
			\path (a) edge (b) (c) edge (b);
			\path[densely dashed, gray]  (a) edge (c);
		\end{scope}
		\begin{scope}[shift={(0.0,0)}]
			\def\hw{0.3}
			\filldraw[fill=green!50!white,-](0,1.2)--(0.4,1.2)--(0.4,1.2+\hw)--(0,1.2+\hw);
			\filldraw[fill=pink!50!white,-](1.3,0.7)--(1.9,0.7)--(1.9,0.7+\hw)--(1.3,0.7+\hw)--(1.3,0.7);
			\filldraw[fill=blue!20!white,-](0.5,0.2)--(1.1,0.2)--(1.1,0.2+\hw)--(0.5,0.2+\hw)--(0.5,0.2);
			\draw[thick,-](0,0)--(0,1.7);
			\draw[thick,-](2.2,0)--(2.2,1.7);
			\node at (0.2,1.2+\hw*0.5) {$a$};
			\node at (1.6,0.7+\hw*0.5) {$b$};
			\node at (0.8,0.2+\hw*0.5) {$c$};
		\end{scope}
		\begin{scope}[shift={(0.0,\possh)}]
			\node (a) at (0.4,0.7) {$\ibullet a$};
			\node (c) at (0.4,-0.7) {$c$};
			\node (b) at (1.8,0) {$b$};
			\path (a) edge (b) (c) edge (b) (a) edge (c);
		\end{scope}
	\end{tikzpicture}
	\caption{Activity intervals of events (top)
		and corresponding ipomsets (bottom),
		\cf~Ex.~\ref{ex:subsu}.
		Full arrows indicate precedence order;
		dashed arrows indicate event order;
		bullets indicate interfaces.}
	\label{fi:iposets1}
\end{figure}
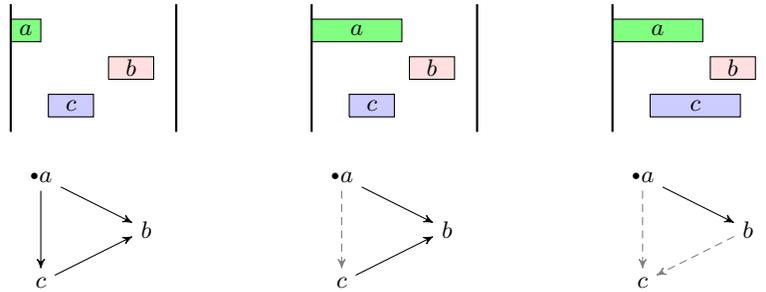

Figure~\ref{fi:iposets1} shows some simple examples.
Source and target events are marked by ``$\ibullet$'' at the left or right side,
and if the event order is not shown, we assume that it goes downwards.
Precedence $<$ and event order $\evord$ are intended to order sequential and concurrent events, respectively. 

An ipomset $P$ is \emph{interval}
if $<_P$ is an interval order~\cite{book/Fishburn85};
that is, if it admits an interval representation
given by functions $f, g: (P, {<_P})\to (\Real, {<_\Real})$ such that
$f(e)\le_\Real g(e)$ for all $e\in P$ and
$e_1<_P e_2$ iff $g(e_1)<_\Real f(e_2)$ for all $e_1, e_2\in P$.
Given that our ipomsets represent activity intervals of events,
any of the ipomsets we will encounter will be interval,
and we omit the qualification ``interval''.
We emphasise that this is \emph{not} a restriction, but rather induced by the semantics,~\cite{Wiener14}.
The \emph{width} $\wid(P)$ of an ipomset $P$ is the cardinality of a maximal $<$-antichain.

We let $\iiPoms$ denote the set of (interval) ipomsets
and $\iiPoms_{\le k}=\{P\in \iiPoms\mid \wid(P)\le k\}$.
We write $\St, \Te, \Id\subseteq \iiPoms$ for the sets of starters, terminators, and identities
and let $\Omega = \St \cup \Te$.
Further, for $S \in \{\St, \Te, \Id, \Omega\}$, $S_{\le k} = S \cap \iiPoms_{\le k}$.
Note that $\Id=\St\cap \Te$ and $\Id_{\le k} = \St_{\le k} \cap \Te_{\le k}$.
%
We introduce special notation for starters and terminators
and write $\starter{U}{A}=\ilo{U\setminus A}{U}{U}$ and $\terminator{U}{B}=\ilo{U}{U}{U\setminus B}$.
The intuition is that $\starter{U}{A}$ does nothing but start the events in $A=U\setminus S_U$
and $\terminator{U}{B}$ terminates the events in $B=U\setminus T_B$.

Ipomsets may be \emph{refined} by shortening activity intervals,
potentially removing concurrency and expanding precedence.
The inverse to refinement is called \emph{subsumption} and defined as follows.
For ipomsets $P$ and $Q$ we say that $Q$ subsumes $P$ and write $P\subsu Q$
if there is a bijection $f: P\to Q$ for which
\begin{enumerate}[(1)]
	\item $f(S_P)=S_Q$, $f(T_P)=T_Q$, and $\lambda_Q\circ f=\lambda_P$,
	\item $f(e_1)<_Q f(e_2)\implies e_1<_P e_2$, and $e_1\evord_P e_2\implies f(e_1)\evord_Q f(e_2)$.
\end{enumerate}
This definition adapts the one of~\cite{DBLP:journals/fuin/Grabowski81} to event orders and interfaces.
Intuitively, $P$ has more order and less concurrency than $Q$.

\begin{example}
	\label{ex:subsu}
	In Fig.~\ref{fi:iposets1} there is a sequence of subsumptions from left to right: $\ibullet acb \subsu \loset{\ibullet a\\\hphantom{\ibullet}c}b \subsu \loset{\ibullet a \to b \\ \hphantom{\ibullet}c}$.
	An event $e_1$ is smaller than $e_2$ in the precedence order if $e_1$ is terminated before $e_2$ is started;
	$e_1$ is smaller than $e_2$ in the event order if they are concurrent and $e_1$ is above $e_2$ in the respective conclist.
\end{example}

\emph{Isomorphisms} of ipomsets are invertible subsumptions,
\ie bijections $f$ for which the second item above is strengthened to
\begin{enumerate}[(1$'$)]
	\setcounter{enumi}1
	\item $f(e_1)<_Q f(e_2)\iff e_1<_P e_2$ and $e_1\evord_P e_2\iff f(e_1)\evord_Q f(e_2)$.
\end{enumerate}
We write $P\simeq Q$ if $P$ and $Q$ are isomorphic.
Because of the requirement that all elements are related by $<$ or $\evord$,
there is at most one isomorphism between any two ipomsets.
That means that we may without danger switch between ipomsets and their isomorphism classes,
and we will do so often in the sequel.

\begin{figure}[tbp]
	\centering
	\begin{tikzpicture}[x=.35cm, y=.5cm]
		\begin{scope}
			\node (1) at (2,2) {$\vphantom{bd}a$};
			\node (2) at (0,0) {$b$};
			\node (3) at (4,0) {$\vphantom{bd}c\ibullet$};
			\path (2) edge (3);
			\path (1) edge[densely dashed, gray] (2);
			\path (3) edge[densely dashed, gray] (1);
			\node (ast) at (5.5,1) {$\ast$};
			\node (4) at (7,2) {$d$};
			\node (5) at (7,0) {$\vphantom{bd}\ibullet c$};
			\path (4) edge[densely dashed, gray] (5);
			\node (equals) at (8.5,1) {$=$};
			\node (6) at (10,2) {$\vphantom{bd}a$};
			\node (7) at (10,0) {$b$};
			\node (8) at (14,2) {$d$};
			\node (9) at (14,0) {$\vphantom{bd}c$};
			\path (6) edge (8);
			\path (7) edge (9);
			\path (7) edge (8);
			\path (8) edge[densely dashed, gray] (9);
			\path (9) edge[densely dashed, gray] (6);
			\path (6) edge[densely dashed, gray] (7);
		\end{scope}
	\end{tikzpicture}
	\caption{Gluing
		composition of ipomsets.}
	\label{fi:compositions}
\end{figure}
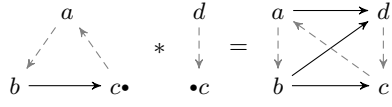

The \emph{gluing} $P*Q$ of ipomsets $P$ and $Q$ is defined if $T_P=S_Q$ \emph{as conclists}
(hence ${\evord_P}\rest{T_P\times T_P}={\evord_Q}\rest{S_Q\times S_Q}$
and $\lambda_P\rest{T_P}=\lambda_Q\rest{S_Q}$),
and then $P*Q=(P\cup Q, {<}, {\evord}, S_P, T_Q, \lambda)$, where
${<}=({<_P}\cup {<_Q}\cup (P\setminus T_P)\times (Q\setminus S_Q))^+$,
${\evord} = {\evord_P}\cup {\evord_Q}$, and
$\lambda = \lambda_P\cup \lambda_Q$.
(Here ${}^+$ denotes transitive closure.)
Ipomsets in $\Id$ are identities for $*$.
Figure \ref{fi:compositions} shows an example.





Any ipomset $P$ can be decomposed as a gluing of starters and terminators $P=P_1*\dots* P_n$
\cite{DBLP:journals/iandc/FahrenbergJSZ22, DBLP:journals/fuin/JanickiK19}.
Such a presentation we call a \emph{step decomposition}.
If starters and terminators are alternating, the step decomposition is called \emph{sparse}.
\begin{lemma}[\cite{DBLP:journals/corr/abs-2210-08298}]
	\label{le:ipomsparse}
	Every ipomset $P$ has a unique sparse step decomposition.
\end{lemma}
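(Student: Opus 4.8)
The plan is to treat existence and uniqueness separately, deriving existence cheaply from the step decomposition already guaranteed in the text and spending the real effort on uniqueness. For existence, I would start from an arbitrary step decomposition $P = R_1 * \cdots * R_\ell$ into starters and terminators and observe that two adjacent factors of the same type merge: a direct check gives $\starter{U}{A} * \starter{V}{B} = \starter{V}{A \cup B}$ and dually for terminators, since both factors are discrete and the gluing introduces no new precedence. Deleting identity steps (starters that start nothing, terminators that terminate nothing) and repeatedly merging adjacent same-type factors strictly decreases the number of steps, so the process terminates in an alternating, hence sparse, decomposition. The degenerate case where $P$ is itself a starter, terminator, or identity gives a one-step decomposition.

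For uniqueness, the key bookkeeping is to record, for each event $e$, the index $s(e)$ of the (unique) starter introducing $e$ and the index $t(e)$ of the terminator removing it (with $s(e)=0$ for $e\in S_P$ and $t(e)=\infty$ for $e\in T_P$), and to note that precedence is recovered by the rule that $e<e'$ precisely when $e$ is terminated before $e'$ is started, i.e. $t(e)\le s(e')$; as no single step is simultaneously a starter and a terminator, $t(e)=s(e')$ cannot occur, so $e<e' \iff t(e)<s(e')$. The heart of the argument is then a single lemma: in any sparse decomposition $P=P_1 * \cdots * P_n$, the first factor $P_1$ is determined by $P$ alone. I would prove this in two stages. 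First, the \emph{type} of $P_1$ is forced: if $P_1$ is a starter it introduces a nonempty set of $<$-minimal events outside $S_P$, whereas if $P_1$ is a terminator then every $<$-minimal event already lies in $S_P$ (any source $b$ removed at step $1$ has $t(b)=1<s(e)$, hence $b<e$, for every $e\notin S_P$, contradicting minimality of a minimal $e\notin S_P$). Thus $P_1$ is a starter iff $P$ has a $<$-minimal event outside $S_P$, a condition depending only on $P$, so the two decompositions agree on the type of their first factor.

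Second, the factor itself is forced. If $P_1$ is a starter, the set $A$ it introduces equals $\{e\notin S_P : e \text{ is } <\text{-minimal}\}$: the inclusion $\subseteq$ is immediate since nothing is terminated before step $1$, and for $\supseteq$ a minimal $e\notin S_P$ introduced only later (necessarily at step $\ge 3$, as $P_2$ is a terminator) would satisfy $b<e$ for the event $b$ removed by $P_2$, again contradicting minimality. Dually, if $P_1$ is a terminator, the set it removes is exactly $\{b\in S_P : b<e \text{ for all } e\in P\setminus S_P\}$, which is determined by $P$. I then conclude by induction on $|P\setminus S_P|+|P\setminus T_P|$: cancelling the forced first factor, the remainder $P_2 * \cdots * P_n$ is the ipomset obtained from $P$ by enlarging its source interface to $T_{P_1}$ (the gluing adds no precedence, since $P_1$ is discrete with empty non-interface part), so it is a well-defined ipomset of strictly smaller measure, and the induction hypothesis yields $P_2 * \cdots * P_n = Q_2 * \cdots * Q_m$, hence equality of all factors.

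I expect the main obstacle to be the maximality half of the first-step lemma — that the first starter must introduce \emph{all} minimal non-source events, and dually for the first terminator — since this is precisely where the alternation hypothesis is used in an essential way, and it requires the index bookkeeping $s(e),t(e)$ together with a careful treatment of the short decompositions ($n\le 2$) in which the terminator $P_2$ witnessing the contradiction may be absent.
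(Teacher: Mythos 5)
Your proof is essentially correct, but note that the paper does not prove Lemma~\ref{le:ipomsparse} at all: it imports it from \cite{DBLP:journals/corr/abs-2210-08298}, so there is no in-paper argument to compare against, and what you give is a self-contained alternative. Your existence argument (merge adjacent same-type factors of an arbitrary step decomposition and delete identities) is the standard route. For uniqueness, the reduction of precedence to the bookkeeping $e<e' \iff t(e)<s(e')$, the forcing of the type and content of the first factor, and the cancellation-plus-induction scheme on $|P\setminus S_P|+|P\setminus T_P|$ are all sound; I checked in particular that transitive closure in the iterated gluing adds nothing beyond the pairs with $t(e)<s(e')$, which is what your recovery rule needs.

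Two corner cases require explicit repair, and you only half-flag them. First, your characterization of the set removed by a first terminator, $\{b\in S_P : b<e \text{ for all } e\in P\setminus S_P\}$, is vacuously all of $S_P$ when $P\setminus S_P=\emptyset$, i.e.\ when $P$ is itself a terminator or an identity; there the formula is simply false (a proper terminator $\terminator{U}{B}$ with $B\subsetneq U$ removes $B$, not $U$). The fix is to note that such a $P$ admits no sparse decomposition containing a proper starter (nothing is ever started), so every sparse decomposition has length one and equals $P$, making first-factor agreement trivial; your parenthetical about ``short decompositions'' gestures at the starter case but this terminator-side failure is the real one. Second, your cancellation step is phrased only for a starter first factor (``enlarge the source interface to $T_{P_1}$, no precedence added''); when $P_1$ is a terminator removing $B$, the remainder $P_2*\dots*P_n$ is instead obtained from $P$ by \emph{deleting} the events of $B$ and taking source $S_P\setminus B$, and one must check that restricting $<_P$ to $P\setminus B$ creates no spurious pairs --- true because elements of $B$ are $<$-minimal, so no precedence path can pass through them. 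With these two patches, plus the observation that an identity factor cannot occur in any sparse decomposition of length at least two (it is both a starter and a terminator, so it breaks alternation with either neighbour), your argument is complete.
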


We will also use the following notion, introduced in \cite{amrane.23.ictac}.
A word $P_1\dots P_n\in \Omega^*$ is \emph{coherent}
if the gluing $P_1*\dots* P_n$ is defined.
We denote  by $\Coh\subseteq \Omega^*$ the set of coherent words and $\Coh_{\le k} = \Coh \cap \iiPoms_{\le k}$.


%
%

	
	\section{Higher-dimensional automata}\label{sec:HDAs}


Let $\sq$ denote the set of conclists.
A \emph{precubical set}
\begin{equation*}
	\mathcal{H}=(\mathcal{H}, {\ev}, \{\delta_{A, U}^0, \delta_{A, U}^1\mid U\in \sq, A\subseteq U\})
\end{equation*}
consists of a set of \emph{cells} $\mathcal{H}$
together with a function $\ev: \mathcal{H}\to \sq$ which to every cell assigns a conclist of concurrent events which are active in it.
We write $\mathcal{H}[U]=\{q\in \mathcal{H}\mid \ev(q)=U\}$ for the cells of type $U$.
For every $U\in \sq$ and $A\subseteq U$ there are \emph{face maps}
$\delta_{A}^0, \delta_{A}^1: \mathcal{H}[U]\to \mathcal{H}[U\setminus A]$
which satisfy $\delta_A^\nu \delta_B^\mu = \delta_B^\mu \delta_A^\nu$
for $A\cap B=\emptyset$ and $\nu, \mu\in\{0, 1\}$.
The \emph{upper} face maps $\delta_A^1$ terminate events in $A$
and the \emph{lower} face maps $\delta_A^0$ transform a cell $q$
into one in which the events in $A$ have not yet started.
A \emph{higher-dimensional automaton} (\emph{HDA}) $\mathcal{H}=(\mathcal{H}, \bot_\mathcal{H}, \top_\mathcal{H})$
is a finite precubical set together with subsets $\bot_\mathcal{H}, \top_\mathcal{H}\subseteq \mathcal{H}$
of \emph{start} and \emph{accept} cells.
The \emph{dimension} of an HDA $\mathcal{H}$ is $\dim(\mathcal{H})=\sup\{|\ev(q)|\mid q\in \mathcal{H}\}\in \Nat$. 

\begin{figure}[tbp]
	\centering
	\begin{tikzpicture}[x=.7cm, y=.62cm, every node/.style={transform shape}]
		\begin{scope}[y=.7cm, scale=.9]
			\node[circle,draw=black,fill=blue!20,inner sep=0pt,minimum size=15pt]
			(aa) at (0,0) {$\vphantom{hy}v_1$};
			\node[circle,draw=black,fill=blue!20,inner sep=0pt,minimum size=15pt]
			(ac) at (0,4) {$\vphantom{hy}v_2$};
			\node[circle,draw=black,fill=blue!20,inner sep=0pt,minimum size=15pt]
			(ca) at (4,0) {$\vphantom{hy}v_3$};
			\node[circle,draw=black,fill=blue!20,inner sep=0pt,minimum size=15pt]
			(cc) at (4,4) {$\vphantom{hy}v_4$};
			\node[circle,draw=black,fill=blue!20,inner sep=0pt,minimum size=15pt]
			(ae) at (0,8) {$\vphantom{hy}v_5$};
			\node[circle,draw=black,fill=blue!20,inner sep=0pt,minimum size=15pt]
			(ec) at (8,4) {$\vphantom{hy}v_6$};
			\node[circle,draw=black,fill=blue!20,inner sep=0pt,minimum size=15pt]
			(ce) at (4,8) {$\vphantom{hy}v_7$};
			\node[circle,draw=black,fill=blue!20,inner sep=0pt,minimum size=15pt]
			(ee) at (8,8) {$\vphantom{hy}v_8$};
			\node[circle,draw=black,fill=green!30,inner sep=0pt,minimum size=15pt]
			(ba) at (2,0) {$\vphantom{hy}t_1$};
			\node[circle,draw=black,fill=green!30,inner sep=0pt,minimum size=15pt]
			(bc) at (2,4) {$\vphantom{hy}t_2$};
			\node[circle,draw=black,fill=green!30,inner sep=0pt,minimum size=15pt]
			(ab) at (0,2) {$\vphantom{hy}t_3$};
			\node[circle,draw=black,fill=green!30,inner sep=0pt,minimum size=15pt]
			(ad) at (0,6) {$\vphantom{hy}t_5$};
			\node[circle,draw=black,fill=green!30,inner sep=0pt,minimum size=15pt]
			(be) at (2,8) {$\vphantom{hy}t_6$};
			\node[circle,draw=black,fill=green!30,inner sep=0pt,minimum size=15pt]
			(cd) at (4,6) {$\vphantom{hy}t_8$};
			\node[circle,draw=black,fill=green!30,inner sep=0pt,minimum size=15pt]
			(de) at (6,8) {$\vphantom{hy}t_9$};
			\node[circle,draw=black,fill=green!30,inner sep=0pt,minimum size=15pt]
			(dc) at (6,4) {$\vphantom{hy}t_7$};
			\node[circle,draw=black,fill=green!30,inner sep=0pt,minimum size=15pt]
			(ed) at (8,6) {$\vphantom{hy}t_{10}$};
			\node[circle,draw=black,fill=green!30,inner sep=0pt,minimum size=15pt]
			(cb) at (4,2) {$\vphantom{hy}t_{4}$};
			\node[circle,draw=black,fill=black!20,inner sep=0pt,minimum size=15pt]
			(bb) at (2,2) {$\vphantom{hy}q_1$};
			\node[circle,draw=black,fill=black!20,inner sep=0pt,minimum size=15pt]
			(bd) at (2,6) {$\vphantom{hy}q_2$};
			\node[circle,draw=black,fill=black!20,inner sep=0pt,minimum size=15pt]
			(dd) at (6,6) {$\vphantom{hy}q_3$};
			\path (ba) edge node[above] {$\delta^0_a$} (aa);
			\path (ba) edge node[above] {$\delta^1_a$} (ca);
			\path (bb) edge node[above] {$\delta^0_a$} (ab);
			\path (bb) edge node[above] {$\delta^1_a$} (cb);
			\path (bc) edge node[above] {$\delta^0_a$} (ac);
			\path (bc) edge node[above] {$\delta^1_a$} (cc);
			\path (ab) edge node[left] {$\delta^0_c$} (aa);
			\path (ab) edge node[left] {$\delta^1_c$} (ac);
			\path (bb) edge node[left] {$\delta^0_c$} (ba);
			\path (bb) edge node[left] {$\delta^1_c$} (bc);
			\path (cb) edge node[left] {$\delta^0_c$} (ca);
			\path (cb) edge node[left] {$\delta^1_c$} (cc);
			\path (bb) edge node[above left] {$\delta^1_{ac}\!\!$} (cc);
			\path (bb) edge node[above left] {$\delta^0_{ac}\!\!$} (aa);
			\path (ad) edge node[left] {$\delta^0_d$} (ac);
			\path (ad) edge node[left] {$\delta^1_d$} (ae);
			\path (bd) edge node[left] {$\delta^0_d$} (bc);
			\path (bd) edge node[left] {$\delta^1_d$} (be);
			\path (dc) edge node[above] {$\delta^0_a$} (cc);
			\path (dc) edge node[above] {$\delta^1_a$} (ec);
			\path (bd) edge node[above] {$\delta^0_a$} (ad);
			\path (bd) edge node[above] {$\delta^1_a$} (cd);
			\path (be) edge node[above] {$\delta^0_a$} (ae);
			\path (be) edge node[above] {$\delta^1_a$} (ce);
			\path (bd) edge node[above left] {$\delta^1_{ad}\!\!$} (ce);
			\path (bd) edge node[above left] {$\delta^0_{ad}\!\!$} (ac);
			\path (dd) edge node[above] {$\delta^0_a$} (cd);
			\path (dd) edge node[above] {$\delta^1_a$} (ed);
			\path (de) edge node[above] {$\delta^0_a$} (ce);
			\path (de) edge node[above] {$\delta^1_a$} (ee);
			\path (cd) edge node[left] {$\delta^0_d$} (cc);
			\path (cd) edge node[left] {$\delta^1_d$} (ce);
			\path (dd) edge node[left] {$\delta^0_d$} (dc);
			\path (dd) edge node[left] {$\delta^1_d$} (de);
			\path (ed) edge node[left] {$\delta^0_d$} (ec);
			\path (ed) edge node[left] {$\delta^1_d$} (ee);
			\path (dd) edge node[above left] {$\delta^1_{ad}\!\!$} (ee);
			\path (dd) edge node[above left] {$\delta^0_{ad}\!\!$} (cc);
			\node[below left] at (ab) {$\bot\;$};
			\node[above right] at (ee) {$\;\top$};
		\end{scope}
		\begin{scope}[shift={(-5,7.25)}]
			\node[right] at (9,-7.5) {$\mathcal{H}[\emptyset]=\{v_1,\dots, v_8\}$, $\mathcal{H}[a]=\{t_1,t_2,t_6,t_7,t_9\}$};
			\node[right] at (9,-6.7) {$\mathcal{H}[c]=\{t_3,t_4\}$, $\mathcal{H}[d]=\{t_5,t_8,t_{10}\}$};
			\node[right] at (9,-5.9) {$\mathcal{H}[\loset{a\\c}]=\{q_1\}$};
			\node[right] at (9,-5.1) {$\mathcal{H}[\loset{a\\d}]=\{q_2,q_3\}$};
			\node[right] at (9,-4.3) {$\bot_\mathcal{\mathcal{H}}=\{t_3\}$, $\top_\mathcal{\mathcal{H}}=\{v_8\}$};
		\end{scope}
		\begin{scope}[shift={(10,1.5)}, x=1.3cm, y=1.15cm, scale=.9]
			\filldraw[color=black!15] (0,0)--(2,0)--(2,2)--(0,2)--(0,0);
			\filldraw[color=black!15] (0,2)--(0,4)--(4,4)--(4,2)--(0,2);
			\filldraw (0,0) circle (0.05);
			\filldraw (2,0) circle (0.05);
			\filldraw (0,2) circle (0.05);
			\filldraw (2,2) circle (0.05);
			\filldraw (0,4) circle (0.05);
			\filldraw (4,2) circle (0.05);
			\filldraw (4,4) circle (0.05);
			\filldraw (2,4) circle (0.05);
			\path[line width=.5] (0,0) edge node[below, black] {$\vphantom{b}a$} (1.95,0);
			\path[line width=.5] (0,2) edge node[left, black] {$\vphantom{b}d$} (0,3.95);
			\path[line width=.5] (0,2) edge (1.95,2);
			\path[line width=.5] (2,2) edge node[pos=.6, below, black] {$\vphantom{bg}a$} (3.95,2);
			\path[line width=.5] (2,4) edge (3.95,4);
			\path[line width=.5] (0,0) edge node[pos=.6, left, black] {$\vphantom{bg}c$} (0,1.95);
			\path[line width=.5] (2,0) edge (2,1.95);
			\path[line width=.5] (0,4) edge (1.95,4);
			\path[line width=.5] (2,2) edge (2,3.95);
			\path[line width=.5] (4,2) edge (4,3.95);
			\node[left] at (0,0.9) {$\bot$};
			\node[above] at (4,4) {$\top$};
			
			\node[blue,centered] at (0,-0.2) {$v_1$};
			\node[centered, green!50!black] at (1,0.15) {$t_1$};
			\node[blue,centered] at (2,-0.2) {$v_3$};
			\node[centered,blue] at (1.8,2.2) {$v_4$};
			\node[centered,blue] at (-0.2,2) {$v_2$};
			
			\node[centered,blue] at (-0.2,4) {$v_5$};
			\node[centered,blue] at (4.2,2) {$v_6$};
			\node[centered,blue] at (2,4.2) {$v_7$};
			\node[centered,blue] at (4.2,4) {$v_8$};
			
			\node[centered, green!50!black] at (0.2,1.1) {$\vphantom{bg}t_3$};
			\node[centered, green!50!black] at (1.8,1.1) {$\vphantom{bg}t_4$};
			\node[centered, green!50!black] at (1,1.75) {$t_2$};
			\node[centered, green!50!black] at (0.15,3) {$t_5$};
			\node[centered, green!50!black] at (3,2.15) {$t_7$};
			\node[centered, green!50!black] at (3,3.85) {$t_9$};
			\node[centered, green!50!black] at (1.1,3.8) {$\vphantom{bg}t_6$};
			\node[centered, green!50!black] at (2.2,3.1) {$\vphantom{bg}t_8$};
			\node[centered, green!50!black] at (3.8,3.1) {$\vphantom{bg}t_{10}$};
			\node[centered] at (1,1) {$q_1$};
			\node[centered] at (1,3) {$q_2$};
			\node[centered] at (3,3) {$q_3$};
		\end{scope}
	\end{tikzpicture}
	\caption{A two-dimensional HDA $\mathcal{H}$ on $\Sigma=\{a, c, d\}$, see Ex.~\ref{ex:hda}.}
	\label{fi:abcube}
\end{figure}
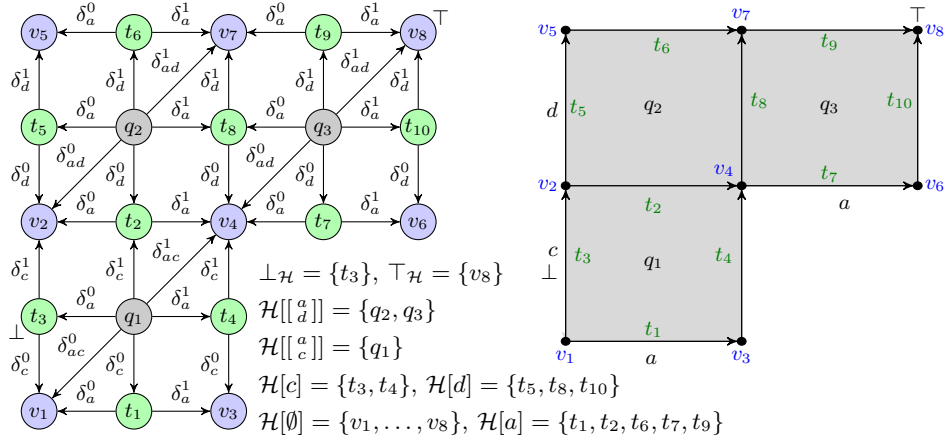

A standard automaton is the same as a one-dimensional HDA $\mathcal{H}$
with the property that for all $q \in \bot_\mathcal{H} \cup \top_\mathcal{H}$, $\ev(q) = \emptyset$:
cells in $\mathcal{H}[\emptyset]$ are states,
cells in $\mathcal{H}[\{a\}]$ for $a\in \Sigma$ are $a$-labelled transitions,
and face maps $\delta_{\{a\}}^0$ and $\delta_{\{a\}}^1$
attach source and target states to transitions.
In contrast to ordinary automata we allow start and accept \emph{transitions}
instead of merely states,
so languages of one-dimensional HDAs may contain words with interfaces.
%
	%

\begin{example}
	\label{ex:hda}
	Figure~\ref{fi:abcube} shows a two-dimensional HDA as a combinatorial object (left)
	and in a geometric realisation (right).
	It consists of
	21 cells:
	states $\mathcal{H}_0 = \{v_1,\dots, v_8\}$ in which no event is active ($\ev(v_i) = \emptyset$),
	transitions $\mathcal{H}_1 = \{t_1,\dots, t_{10}\}$ in which one event is active (\eg $\ev(t_3) = \ev(t_4) = c$),
	squares $\mathcal{H}_2 = \{q_1, q_2, q_3\}$ where $\ev(q_1) = \loset{a\\c}$ and $\ev(q_2) = \ev(q_3) = \loset{a\\d}$.
	The arrows between cells in the left representation correspond to the face maps connecting them.
	For example, the upper face map $\delta^1_{a c}$ maps $q_1$ to $v_4$
	because the latter is the cell in which the active events $a$ and $c$ of $q_1$ have been terminated.
	On the right, face maps are used to glue cells,
	so that for example $\delta^1_{a c}(q_1)$ is glued to the top right of $q_1$.
	In this and other geometric realisations,
	when we have two concurrent events $a$ and $c$ with $a\evord c$, we will draw $a$ horizontally and $c$ vertically.
\end{example}

\emph{Computations} of HDAs are \emph{paths}, \ie sequences
$\alpha=(q_0, \phi_1, q_1, \dotsc, q_{n-1},$ $\phi_n, q_n)$
consisting of cells $q_i\in \mathcal{H}$ and symbols $\phi_i$ which indicate face map types:
for every $i\in\{1,\dotsc, n\}$, $(q_{i-1}, \phi_i, q_i)$ is either
\begin{itemize}
	\item $(\delta^0_A(q_i), \arrO{A}, q_i)$ for $A\subseteq \ev(q_i)$ (an \emph{upstep})
	\item or $(q_{i-1}, \arrI{A}, \delta^1_A(q_{i-1}))$ for $A\subseteq \ev(q_{i-1})$ (a \emph{downstep}).
\end{itemize}
Downsteps terminate events, following upper face maps,
whereas upsteps start events by following inverses of lower face maps.
We denote by $\upsteps{\mathcal{H}}$ and $\downsteps{\mathcal{H}}$ the finite set of upsteps and downsteps of $\mathcal{H}$.

The \emph{source} and \emph{target} of $\alpha$ as above are $\src(\alpha)=q_0$ and $\tgt(\alpha)=q_n$.
A path $\alpha$ is \emph{accepting} if $\src(\alpha)\in \bot_\mathcal{H}$ and $\tgt(\alpha)\in \top_\mathcal{H}$.
Paths $\alpha$ and $\beta$ may be concatenated
if $\tgt(\alpha)=\src(\beta)$;
their concatenation is written $\alpha*\beta$.

\emph{Path equivalence} is the congruence $\simeq$
generated by $(q\arrO{A} r\arrO{B} p)\simeq (q\arrO{A\cup B} p)$,
$(p\arrI{A} r \arrI{B} q)\simeq (p\arrI{A\cup B} q)$, and
$\gamma \alpha \delta\simeq \gamma \beta \delta$ whenever $\alpha\simeq \beta$.
This relation allows to assemble subsequent upsteps or downsteps into one bigger step.

The
\emph{event ipomset} $\ev(\alpha)$
of a path $\alpha$ is defined recursively as follows:
\begin{itemize}
	\item if $\alpha=(q)$, then
	$\ev(\alpha)=\id_{\ev(q)}$;
	\item if $\alpha=(q\arrO{A} p)$, then
	$\ev(\alpha)=\starter{\ev(p)}{A}$;
	\item if $\alpha=(p\arrI{B} q)$, then
	$\ev(\alpha)=\terminator{\ev(p)}{B}$;
	\item if $\alpha=\alpha_1*\dotsm*\alpha_n$ is a concatenation, then
	$\ev(\alpha)=\ev(\alpha_1)*\dotsm*\ev(\alpha_n)$.
\end{itemize}
Note that upsteps in $\alpha$ correspond to starters in $\ev(\alpha)$ and downsteps correspond to terminators.
Path equivalence $\alpha\simeq \beta$ implies $\ev(\alpha)=\ev(\beta)$~\cite{DBLP:conf/concur/FahrenbergJSZ22}.

\begin{example}
	\label{ex:paths}
	The HDA $X$ of Ex.~\ref{ex:hda} (Fig.~\ref{fi:abcube}) admits several accepting paths,
	for example
    $t_3\arrO{a} q_1\arrI{c} t_2 \arrO{d} q_2 \arrI{a} t_8 \arrO{a} q_3 \arrI{ad} v_8$.
	Its event ipomset is 
        \begin{equation*}
          \starter{\!\loset{a\\c}}{a}\,* \terminator{\loset{a\\c}}{c}\,*
          \starter{\!\loset{a\\d}}{d}\,* \terminator{\loset{a\\d}}{a}\,*
          \starter{\!\loset{a\\d}}{a}\,* \terminator{\loset{a\\d}}{ad} =
          \left[\vcenter{\hbox{%
                \begin{tikzpicture}
                  \node (a) at (0.4,1.4) {$a$};
                  \node (c) at (0.4,0.7) {$\ibullet c$};
                  \node (b) at (1.8,1.4) {$a$};
                  \node (d) at (1.8,0.7) {$d$};
                  \path (a) edge (b);
                  \path (c) edge (d);
                  \path (c) edge (b);
                  \path[densely dashed, gray] (b) edge (d) (a) edge (d) (a) edge (c);
                \end{tikzpicture}
              }}\right]
        \end{equation*}
    which is a sparse step decomposition.
    This path is equivalent to
    $t_3\arrO{a} q_1\arrI{c} t_2 \arrO{d} q_2 \arrI{a} t_8 \arrO{a} q_3 \arrI{a} t_{10} \arrI{d} v_8$ which induces the coherent word $w_1$ of Fig.\ref{fi:example}.
\end{example}

The \emph{language} of an HDA $\mathcal{H}$ is
$\Lang(\mathcal{H}) = \{\ev(\alpha)\mid \alpha \text{ accepting path in } \mathcal{H}\}$.


For $A\subseteq \iiPoms$ we let
\begin{equation*}
	A\down=\{P\in \iiPoms\mid \exists Q\in A: P\subsu Q\}.
\end{equation*}
%
A \emph{language} is a subset $L\subseteq \iiPoms$ for which $L\down=L$.
%
The \emph{width} of
$L$ is $\wid(L)=\sup\{\wid(P)\mid P\in L\}$.
For $k\ge 0$ and $L\in \iiPoms$, denote $L_{\le k}=\{P\in L\mid \wid(P)\le k\}$.
%
The \emph{singleton ipomsets} are
$[a]$ $[\ibullet a]$, $[a\ibullet]$ and $[\ibullet a\ibullet]$,
for all $a\in \Sigma$.

A language is \emph{regular} if it is the language of a finite HDA.
It is \emph{rational} if it is constructed from $\emptyset$, $\{\id_\emptyset\}$ and discrete ipomsets
using $\cup$, $*$ and $^+$ (Kleene plus)~\cite{DBLP:conf/concur/FahrenbergJSZ22}.
Languages of HDAs are closed under subsumption, that is,
if $L$ is regular, then $L \down = L$ \cite{DBLP:journals/mscs/FahrenbergJSZ21, DBLP:conf/concur/FahrenbergJSZ22}.
The rational operations above have to take this closure into account.

\begin{theorem}[\cite{DBLP:conf/concur/FahrenbergJSZ22}]
	\label{th:kleene}
	A language is regular if and only if it is rational.
\end{theorem}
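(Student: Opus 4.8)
The plan is to establish both inclusions, as in the classical Kleene theorem. For \emph{rational} $\Rightarrow$ \emph{regular} I would argue by structural induction on the rational expression. The base languages are regular: $\emptyset$ is recognised by the empty HDA, $\{\id_\emptyset\}$ by a single state that is both a start and an accept cell, and for a discrete ipomset $U$ with source $S$ and target $T$ the language $\{U\}\down$ is recognised by a single filled cube, namely one top cell $q$ with $\ev(q)=U$ together with all its faces, with $\bot_\mathcal{H}=\{\delta^0_{U\setminus S}(q)\}$ and $\top_\mathcal{H}=\{\delta^1_{U\setminus T}(q)\}$; its accepting paths traverse the faces of the cube and realise exactly $\{U\}\down$, the interior paths yielding full concurrency and the boundary paths the sequentialisations. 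For the closure properties, $\cup$ is disjoint union of HDAs, $*$ is obtained by identifying each accept cell of the first HDA with a start cell of the second of the same conclist type (which is precisely the coherence condition $T_P=S_Q$ of gluing), and $^+$ adds the analogous identifications from accept cells back to start cells. In each case one verifies that accepting paths of the result split as concatenations of accepting paths of the components, so that their event ipomsets are the corresponding gluings; since HDA languages are automatically subsumption-closed, the closure built into the rational operations is matched for free.

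For the converse I would reduce to the classical Kleene theorem over words. From a finite HDA $\mathcal{H}$, build the finite automaton $\mathcal{A}$ whose states are the cells of $\mathcal{H}$, whose transitions are the upsteps and downsteps of $\mathcal{H}$ labelled by the corresponding starter or terminator in $\Omega$, and whose initial and final states are $\bot_\mathcal{H}$ and $\top_\mathcal{H}$. As $\mathcal{H}$ is finite, $\mathcal{A}$ is a finite automaton over the finite subalphabet $\Omega_\mathcal{H}\subseteq\Omega$ of steps occurring in $\mathcal{H}$, so its word language $W\subseteq\Omega^*$ is regular, and $W\subseteq\Coh$ because every accepted word is the step sequence of a genuine path. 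Since $\ev(\alpha)$ is the gluing of the letters read along $\alpha$, we have $\Lang(\mathcal{H})=\{P_1*\dots*P_n\mid P_1\dots P_n\in W\}$. Applying classical Kleene to $\mathcal{A}$ yields a regular expression for $W$ over $\Omega_\mathcal{H}$, which I would translate termwise into a rational ipomset expression: each letter is read as the discrete ipomset it denotes, concatenation as gluing $*$, union as $\cup$, and Kleene star as $^+$ together with the appropriate identity ipomset (itself a discrete base case). Because $W\subseteq\Coh$, every product occurring in the translation is a defined gluing, and the resulting rational language equals $\Lang(\mathcal{H})$.

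The steps I expect to cost the most are the gluing and iteration constructions in the first direction: to keep the result a precubical set one must identify the matching accept/start cells \emph{together with} all of their shared faces in a consistent way, and then check that this creates neither spurious accepting paths nor a failure of the coherence conditions, so that the language is exactly the gluing (resp.\ Kleene plus) of the component languages. On the second side, the delicate point is soundness of the word-to-ipomset translation: it relies on $W\subseteq\Coh$ to guarantee that every concatenation appearing in the regular expression---including those produced by state elimination---corresponds to a defined gluing, so that passing the classical regular expression through the gluing map indeed yields a rational expression denoting $\Lang(\mathcal{H})$.
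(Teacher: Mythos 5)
First, a caveat: the paper you were checked against does not actually prove Theorem~\ref{th:kleene}; it imports it from \cite{DBLP:conf/concur/FahrenbergJSZ22}. So I compare your attempt with that proof and with the later step-sequence (``ST-automaton'') route of \cite{amrane.23.ictac}. Your second direction (regular $\Rightarrow$ rational) is essentially that known route and is sound in outline: pass to the finite automaton over starters and terminators, apply the classical Kleene theorem, and push the resulting expression through gluing; the same letter-by-letter translation (replace $\cdot$ by $*$, keeping only defined gluings) is used in Section~\ref{sec:MSO-to-HDAs} of this paper. You only need to patch corner cases: length-zero accepting paths yield identity ipomsets but the \emph{empty word} in your automaton (so identities need separate treatment), and $\varepsilon$/Kleene star must first be eliminated in favour of $\cup$, $\cdot$, ${}^+$, since gluing has no single two-sided identity.

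The genuine gap is in your rational $\Rightarrow$ regular direction, exactly at the step you flagged as costly: gluing composition and Kleene plus cannot be realised by \emph{identifying} accept cells of $\mathcal{H}_1$ with start cells of $\mathcal{H}_2$ of the same conclist type. Identification is symmetric, so a path in the quotient may enter an identified cell from the $\mathcal{H}_2$ side and leave on the $\mathcal{H}_1$ side, weaving back and forth; no bookkeeping of shared faces prevents this. This already fails for one-dimensional HDAs, where your construction is the well-known incorrect ``merge final states with the initial state'' construction for NFA concatenation. Concretely, let $\mathcal{H}_1$ have transitions $s \xrightarrow{a} q$, $q \xrightarrow{b} q'$, $q' \xrightarrow{e} q$ with start $s$ and accept $q$, so $\Lang(\mathcal{H}_1)=a(be)^*$, and let $\mathcal{H}_2$ have transitions $r \xrightarrow{c} t$, $t \xrightarrow{d} r$ with start $r$ and accept $t$, so $\Lang(\mathcal{H}_2)=c(dc)^*$. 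Identifying $q$ with $r$ yields an automaton accepting $acdbec \notin \Lang(\mathcal{H}_1)*\Lang(\mathcal{H}_2)$. For word automata one repairs this with directed $\varepsilon$-edges, or by normalising so that accept states have no outgoing and initial states no incoming transitions; but for HDAs that normalisation is \emph{impossible} whenever interfaces are nonempty: an accept cell $q$ with $\ev(q)=U\neq\emptyset$ always has outgoing downsteps $q \arrI{A} \delta^1_A(q)$, and a start cell with nonempty conclist always has incoming upsteps, because face maps are total functions. Coping with this is the actual content of the proof in \cite{DBLP:conf/concur/FahrenbergJSZ22}, which develops HDAs with interfaces, resolutions and cylinder-type constructions precisely so that gluing and iteration of HDAs can be defined without creating spurious paths. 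Your base cases (the single filled cube recognising $\{U\}\down$ for a discrete ipomset $U$, disjoint union for $\cup$) are fine, but the heart of this direction is missing.
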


\begin{lemma}[\cite{DBLP:conf/concur/FahrenbergJSZ22}]
	\label{lem:finitewidth}
	Any regular language has finite width.
\end{lemma}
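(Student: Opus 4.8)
The plan is to bound the width of $\Lang(\mathcal{H})$ by the dimension of the (finite) HDA $\mathcal{H}$, so that finiteness of $\wid(\Lang(\mathcal{H}))$ follows immediately from finiteness of $\dim(\mathcal{H})=\sup\{|\ev(q)|\mid q\in\mathcal{H}\}$, which holds because $\mathcal{H}$ has finitely many cells. Concretely, I would prove the inequality $\wid(\Lang(\mathcal{H}))\le\dim(\mathcal{H})$. Since every $P\in\Lang(\mathcal{H})$ is $\ev(\alpha)$ for some accepting path $\alpha$, it suffices to show $\wid(\ev(\alpha))\le\dim(\mathcal{H})$ for every path $\alpha=(q_0,\phi_1,q_1,\dots,\phi_n,q_n)$.

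The key idea is to read off the precedence order of $\ev(\alpha)$ from the structure of the path. Each event $e$ of $\ev(\alpha)$ is started by exactly one upstep (or belongs to the source interface $S_{q_0}$) and terminated by exactly one downstep (or belongs to $T_{q_n}$); in between it remains active. First I would make this precise by assigning to each event $e$ a \emph{lifespan} $L(e)=\{i\mid e\in\ev(q_i)\}$, and verify that $L(e)$ is a contiguous interval $\{s_e,s_e+1,\dots,t_e\}$ of cell indices: once $e$ has been started it lies in $\ev(q_i)$ for every subsequent cell until the downstep that terminates it, because upsteps and downsteps of other events do not remove $e$ (the compatibility $\delta_A^\nu\delta_B^\mu=\delta_B^\mu\delta_A^\nu$ for $A\cap B=\emptyset$ underlies this). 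Then, unfolding the recursive definition of $\ev(\alpha)$ together with the gluing composition $*$, I would check that for two distinct events $e,e'$ one has $e<_{\ev(\alpha)}e'$ iff $e$ is terminated strictly before $e'$ is started, i.e.\ iff $t_e<s_{e'}$; equivalently, $e$ and $e'$ are \emph{concurrent} (incomparable for $<$) iff their lifespans $L(e)$ and $L(e')$ overlap.

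With this correspondence in hand, the width bound is an instance of the one-dimensional Helly property. A maximal $<$-antichain $X\subseteq\ev(\alpha)$ is a set of events whose lifespans pairwise intersect. Since these lifespans are intervals of the linearly ordered index set $\{0,\dots,n\}$, pairwise intersection forces $\max_{e\in X}s_e\le\min_{e\in X}t_e$, so there is a common index $i$ with $i\in L(e)$ for all $e\in X$. Hence $X\subseteq\ev(q_i)$, giving $|X|\le|\ev(q_i)|\le\dim(\mathcal{H})$. As $X$ was an arbitrary maximal antichain, $\wid(\ev(\alpha))=|X|\le\dim(\mathcal{H})$, and taking the supremum over all accepting paths yields $\wid(\Lang(\mathcal{H}))\le\dim(\mathcal{H})<\infty$.

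The main obstacle is the second step: rigorously establishing that concurrency in $\ev(\alpha)$ corresponds exactly to overlapping lifespans along the path. This requires carefully tracking event identities through the iterated gluing $\ev(\alpha_1)*\dots*\ev(\alpha_n)$ and checking that the transitive closure in the definition of $*$ does not create spurious precedences between events with overlapping lifespans, nor fail to record a precedence between events whose lifespans are disjoint. Once this combinatorial bookkeeping is settled, the Helly argument and the conclusion are routine. (An alternative route would be to induct on a rational expression for the language via Theorem~\ref{th:kleene}, but controlling width through $*$, $^+$, and the subsumption closure $\down$ is more delicate than the direct path-based argument.)
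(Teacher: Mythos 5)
The paper does not prove this lemma itself but imports it from \cite{DBLP:conf/concur/FahrenbergJSZ22}, where the argument is essentially the one you give: $\wid(\Lang(\mathcal{H}))\le\dim(\mathcal{H})<\infty$, because along any accepting path each event is active on a contiguous range of cells, concurrency in $\ev(\alpha)$ corresponds to overlapping lifespans, and the Helly property for intervals on a line places any $<$-antichain inside a single cell $\ev(q_i)$. Your proposal is correct and matches that standard proof.
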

It immediately follows that the universal language $\iiPoms$ is \emph{not} rational.

	\section{MSO}\label{sec:MSO}
	
	Monadic second-order (MSO) logic is an extension of first-order logic  allowing to quantify existentially  and universally 
	over elements as well as subsets of the domain of the structure.
	It uses second-order variables $X, Y, \dots$ interpreted as subsets of the domain in addition to the first-order variables $x, y, \dots$ interpreted as elements of the domain of the structure, and a new binary predicate $x \in X$ interpreted commonly.
	We refer the reader to \cite{Thomas97a} for more details about MSO.


	We interpret MSO over $\iiPoms$. Thus we consider the signature $\mathcal{S} = \{{<}, {\evord},$ $(a)_{a\in \Sigma},s,t\}$ where $<$ and $\evord$ are 
	binary relation symbols and the $a$'s, $s$ and $t$ are unary predicates (over first-order variables). 
	We associate to every ipomset $(P, {<}, {\evord}, S, T, \lambda)$  the relational structure $S = (P ; {<}; {\evord}; (a)_{a \in \Sigma}; \srci; \tgti)$
	where $<$ and $\evord$ are interpreted as the orderings $<$ and $\evord$ over $P$, and $a(x)$, $\srci(x)$ and $\tgti(x)$ hold respectively if and only if $\lambda(x) = a$, $x \in S$ and $x \in T$.
	We say that a relation $R \subseteq 
	P^n \times (2^P)^m$ is \emph{MSO-definable} in $S$ if and only if there exists an MSO-formula	$\psi(x_1 ,\dots , x_n , X_1 , \dots, X_m )$, where the $x_i$'s (resp. $X_j$'s) are free first (resp. second)
	order variables, such that their interpretation in $S$ is a tuple of $R$.
	The well-formed MSO formulas are built using the following grammar: 
	\begin{align*}
		\psi ::={} &a(x) ~\vert ~ \srci(x) ~\vert~ \tgti(x)  ~\vert~  x < y ~ \vert ~ x \evord y ~\vert~ x \in X \\	
		& \exists x.\, \psi ~\vert~ \forall x.\, \psi ~\vert~ \exists X.\, \psi ~\vert~ \forall X.\, \psi  ~\vert~ \psi_1 \wedge \psi_2 ~\vert~ \psi_1 \vee \psi_2~\vert~ \neg \psi
	\end{align*}
	In order to shorten formulas  we use several notations and shortcuts such as $\psi_1\implies\psi_2$.
	We define $x \to y\eqdef x < y \wedge \neg(\exists z. x < z < y)$.

	Let $\psi(x_1,\dots,x_n,X_1,\dots,X_m)$  be an MSO formula  whose free variables are $x_1,\dots,x_n,X_1,\dots,X_m$ and let $P \in \iiPoms$. 
	The pair of functions $\nu = (\nu_1,\nu_2)$ where $\nu_1 : \{x_1,\dots,x_n\} \to P$ and $\nu_2 : \{X_1,\dots,X_m\} \to 2^P$  is called a \emph{valuation} or an \emph{interpretation}.
	We write $P \models_\nu \psi$, or, by a slight abuse of notation, $P\models \psi\big(\nu(x_1),\dots,\nu(x_n),\nu(X_1),\dots,\nu(X_m)\big)$,  if $\psi$ holds when $x_i$ and $X_j$ are interpreted as $\nu(x_i)$ and $\nu(X_j)$.
	A \emph{sentence} is a formula without free variables.
	In this case no valuation is needed.
	Given an MSO sentence $\psi$,	we define $L(\psi) = \{ P \in \iiPoms \mid P \models \psi\}$. 
	Note that this may not be closed under subsumption, hence not a language in our sense.
	A set
	$L \in \iiPoms$ is MSO-definable if and only if there exists an
	MSO sentence $\psi$ over $\mathcal{S}$ such that $L = L(\psi)$. 
	\begin{example}
	Let $\phi = \exists x\, \exists y.\, a(x) \wedge b(y) \wedge \lnot (x < y) \wedge \lnot (y < x)$.
	That is, there are at least two concurrent events, one labelled $a$ and the other $b$.
	$L(\phi)$
	is not width-bounded, as $\phi$ is satisfied, among others, by any conclist which contains at least one $a$ and one $b$, nor closed under subsumption, given that $\loset{a\\b}\models \phi$ but $a b, b a\not\models \phi$.
	Note, however, that $L(\phi)_{\le k}\down$ is a regular language for any $k$.
	\end{example}
	We will use also MSO over words of $\Omega_{\le k}^*$.
	The definitions above can be easily adapted to this case by considering the words as structures of the form $(w, {<}, \lambda: w \to \Omega_{\le k})$:
        totally ordered pomsets over the alphabet $\Omega_{\le k}$,
        and the signature $\{{<}, (\discrete)_{\discrete \in \Omega_{\le k}}\}$: the atomic predicates are $\discrete(x)$ for $\discrete \in \Omega_{\le k}$, $x < y$ and $x \in X$, with
	first-order variables ranging over positions in the word and second-order variables over sets
	of positions.
	We denote by $\MSOWords$ the set of MSO formulas over  $\Omega_{\le k}^*$.
	For example the following $\textup{\MSO}_\Omega^2$ formula  where $P_i \in \Omega_{\le 2}$ stands for the $i$th discrete ipomset of $w_1$ in Fig.~\ref{fi:example} is satisfied only by $w_1$.
	\[
		\phi' \eqdef  \exists y_1,\dots,y_7. \bigwedge_{1 \le i \le 7} P_i(y_i) \land y_1 \to \dots \to y_7 \land \forall y. \bigvee_{1 \le i \le 7} y = y_i\
	\]
	
%
	
	The main result of this paper is the following:
	\begin{theorem}
		\label{th:main}
		For all $L \subseteq \iiPoms$, 
		\begin{enumerate}
			\item if $L$ is MSO-definable, then $L_{\le k}\down$ is regular for all $k \in \Nat$.
			\item if $L$ is regular, then it is MSO-definable. 
		\end{enumerate}
	\end{theorem}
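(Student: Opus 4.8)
The two implications rely on completely different techniques, so I would prove them separately.

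\medskip
\noindent\emph{Direction 2 (regular $\Rightarrow$ MSO-definable).}
Fix a finite HDA $\mathcal{H}$ with $\Lang(\mathcal{H})=L$; by Lemma~\ref{lem:finitewidth} we have $L\subseteq\iiPoms_{\le k}$ for some $k$, and I would build an MSO sentence $\psi_\mathcal{H}$ with $L(\psi_\mathcal{H})=L$ in the spirit of Büchi--Elgot--Trakhtenbrot. The first step is to observe that $P\in L$ iff the \emph{unique} sparse step decomposition of $P$ (Lemma~\ref{le:ipomsparse}) is realised by an accepting path of $\mathcal{H}$: from any accepting path $\alpha$ with $\ev(\alpha)=P$, merging consecutive upsteps and consecutive downsteps yields an equivalent (hence still accepting) path with the same event ipomset whose steps alternate, so by uniqueness it realises the sparse decomposition. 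The second step is that this decomposition is recoverable from $P$ alone in MSO, in fact first-order: two events start at the same phase iff they have the same set of $<$-predecessors (for an interval order these principal down-sets form a chain), and dually two events end at the same phase iff they have the same $<$-successors; hence the start- and end-phases, their linear order, and the active conclist at each phase are all definable. It then remains to guess the run. I would use finitely many second-order variables indexed by the cells of $\mathcal{H}$ to colour, for each phase, a canonical representative event (the $\evord$-least event started, resp.\ terminated, at that phase) by the cell visited there, and let $\psi_\mathcal{H}$ assert that this colouring is a valid accepting run: the cell names have the correct type, consecutive cells are connected by the face map dictated by the step, the first cell lies in $\bot_\mathcal{H}$ and the last in $\top_\mathcal{H}$. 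Corner cases such as $\Id$ and $\id_\emptyset$ require only minor bookkeeping.

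\medskip
\noindent\emph{Direction 1 (MSO-definable $\Rightarrow$ $L_{\le k}\down$ regular).}
Here I would not reason about $\psi$ on ipomsets directly but transport it to words, using the connection of~\cite{amrane.23.ictac}. Since $k$ is fixed, $L_{\le k}$ is again MSO-definable (bounding the width is a first-order condition forbidding antichains of size $k+1$). The target is the word language $W$ over the finite alphabet $\Omega_{\le k}$ consisting of the coherent words $w=P_1\dots P_n\in\Coh_{\le k}$ whose gluing $P_1*\dots*P_n$ lies in $L_{\le k}$: by~\cite{amrane.23.ictac} the downward closure of the set of ipomsets admitting a step decomposition in a regular word language is regular, so it suffices to prove $W$ regular, and by the word-level Büchi--Elgot--Trakhtenbrot theorem it suffices to prove $W$ MSO-definable. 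The core is an MSO interpretation of the glued ipomset inside $w$. Since each event of $P_1*\dots*P_n$ is active over a contiguous block of positions of $w$ and at most $k$ events are active at any position, I would represent events by pairs (position, slot) through $k$ disjoint copies of the word domain; the one-step slot identification between adjacent letters is determined by the finite pair $(P_i,P_{i+1})$, so ``belonging to the same thread'' is an MSO-definable reachability relation. From this, $<$ (one thread ends strictly before another begins), $\evord$, the labels, and the interface predicates are all MSO-definable over $w$. The backwards-translation theorem for MSO interpretations then turns $\psi$ (restricted to $L_{\le k}$) into a word formula $\hat\psi$ over $\Omega_{\le k}^*$; intersecting the language it defines with $\Coh_{\le k}$ (coherence being a local, hence MSO-definable, word condition) gives $W$.

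\medskip
\noindent\emph{Main obstacle.}
In each direction the difficulty is the definability of the combinatorial bridge between the two structures rather than the surrounding logic-to-automata transfers. For Direction~2 it is the claim that the sparse step decomposition, an intrinsically sequential object, is MSO-recoverable from the purely relational ipomset, together with arranging a unique phase representative to carry the run. For Direction~1 it is the MSO interpretation itself, in particular defining thread membership and thereby $<$ across an unbounded number of positions while remaining inside the width-$k$ slot encoding; once that interpretation is established, the rest follows from standard translations and from the results already cited.
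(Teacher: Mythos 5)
Your proposal is correct and takes essentially the same route as the paper: your Direction 1 is the paper's Lemma~\ref{lem:msodec} (the position/slot interpretation of the glued ipomset inside coherent words over $\Omega_{\le k}$, with slot identification across adjacent letters) followed by Büchi's theorem and a known transfer back to ipomsets, and your Direction 2 recovers the sparse step decomposition exactly as in Lemma~\ref{lem:compare} (your predecessor/successor-set characterisation is precisely how the paper defines $\mathsf{St}$ and $\mathsf{Te}$ comparisons in MSO) before guessing the run with second-order variables as in Lemma~\ref{lem:acceptingrun} and Proposition~\ref{th:hdatomso}. The only superficial differences are that the paper indexes the guessed sets by upsteps/downsteps rather than by cells, and closes Direction 1 via rational expressions and the Kleene theorem (Thm.~\ref{th:kleene}) rather than by invoking the downward-closure transfer of~\cite{amrane.23.ictac} directly.
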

	
	\begin{corollary}
		For all $k \in \Nat$, a language $L \subseteq \iiPoms_{\le k}$ is regular if and only if it is MSO-definable.
	\end{corollary}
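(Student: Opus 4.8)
The plan is to derive the corollary directly from Theorem~\ref{th:main}, the only real work being to reconcile the two parts of that theorem on the shared hypothesis $L \subseteq \iiPoms_{\le k}$. For the direction ``regular $\Rightarrow$ MSO-definable'' I would simply invoke Theorem~\ref{th:main}(2): it already asserts that \emph{any} regular language is MSO-definable, with no restriction on width, so in particular a regular $L \subseteq \iiPoms_{\le k}$ is MSO-definable and nothing further is required.

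For the converse I would assume that $L \subseteq \iiPoms_{\le k}$ is a language (hence $L\down = L$) and MSO-definable, and apply Theorem~\ref{th:main}(1) to conclude that $L_{\le k}\down$ is regular. The remaining step is to observe that $L_{\le k}\down = L$. First, since every $P \in L$ satisfies $\wid(P) \le k$ by the hypothesis $L \subseteq \iiPoms_{\le k}$, the restriction $L_{\le k} = \{P \in L \mid \wid(P) \le k\}$ coincides with $L$. Second, since $L$ is a language, $L\down = L$. Combining the two gives $L_{\le k}\down = L\down = L$, which is therefore regular.

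The argument is short, and the single point that warrants care is the identity $L_{\le k}\down = L$; both equalities feeding into it, namely $L_{\le k} = L$ and $L\down = L$, are immediate from the assumptions that $L$ has width at most $k$ and is closed under subsumption. As a consistency check I would note that subsumption can only decrease width: if $P \subsu Q$, then under the defining bijection a $<_P$-antichain maps to a $<_Q$-antichain, so $\wid(P) \le \wid(Q)$. Hence $\iiPoms_{\le k}$ is itself downward-closed for $\subsu$, and no element of $L\down$ escapes width $k$ — confirming that the three candidate sets $L$, $L\down$, and $L_{\le k}\down$ really do collapse to one. I do not anticipate a genuine obstacle in this corollary: all of its content is already carried by the two implications of Theorem~\ref{th:main}, and the bookkeeping above merely strips away the width restriction from part (1) in the presence of the hypothesis $L \subseteq \iiPoms_{\le k}$.
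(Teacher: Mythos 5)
Your proof is correct and takes essentially the same route as the paper: the paper states this corollary as an immediate consequence of Theorem~\ref{th:main}, and your derivation---part~(2) for the ``regular $\Rightarrow$ MSO'' direction, and part~(1) combined with the identities $L_{\le k}=L$ (from $L\subseteq\iiPoms_{\le k}$) and $L\down=L$ (from $L$ being a language) for the converse---is exactly the intended bookkeeping, with the width-monotonicity of subsumption as a sound but optional sanity check.
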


The next two sections are devoted to the proof of Thm.~\ref{th:main}.
For the first assertion we effectively build an HDA~$\cal H$ from a sentence~$\phi$ such that $L(\mathcal{H}) = L(\phi)_{\le k}\down$  for all~$k \in \Nat$.
Since emptiness of HDAs is decidable \cite{amrane.23.ictac}, we have
that for MSO sentences $\varphi$ such that
$L(\mathcal{\varphi}) = L(\phi)_{\le k}\down$, the satisfiability problem
(asking given such a formula $\varphi$, if there exists $P$ such that 
$P \models \varphi$), and the model-checking problem for HDAs (given $\varphi$
and an HDA $\mathcal{H}$, do we have $L(\mathcal{H}) \subseteq L(\varphi)$)
are both decidable. Actually, looking more closely at our construction which
goes through finite automata accepting step sequences, we get the same result
for MSO formulas even without the assumption that $L(\mathcal{\varphi})$
is downward-closed (but still over $\iiPoms_{\le k}$, and not $\iiPoms$).
This could also be shown alternatively by observing that $\iiPoms_{\le k}$ has bounded treewidth (in fact, even bounded pathwidth), and applying 
Courcelle's theorem \cite{Courcelle90}. In fact our implied proof of decidability 
is relatively similar, using step sequences instead of path decompositions.

For the second assertion of the theorem, we show that regular languages of HDAs are MSO-definable, again using an effective construction.
Thus, using both directions of Thm.~\ref{th:main} and the closure properties of HDAs, we also get the for all $k\in \Nat$ and MSO-definable $L\subseteq \iiPoms_{\le k}$, $L\down$ is MSO-definable. Note that this property does \emph{not} hold for the class of \emph{all} pomsets \cite{DBLP:conf/apn/FanchonM09}.

%
%
	
	\section{From MSO to HDAs}\label{sec:MSO-to-HDAs}
	
	Given an MSO sentence $\phi$ over $\iiPoms$ we build an HDA $\mathcal{H}$ such that $L(\mathcal{H}) = L(\phi)_{\le k}\down$.
	The first step is to define an MSO-interpretation of interval ipomsets of width at most $k$ into words of $\Omega_{\le k}^+$, so that:
	\begin{lemma}\label{lem:msodec}
		For every MSO sentence $\phi$ over $\iiPoms$ and every $k$ there exists 
		$\widehat \phi \in \MSOWords$ such that for all
		$P_1\dots P_n \in (\Omega_{\le k} \setminus \{\id_\emptyset\})^+ $, we have
		$P_1 \dots P_n \models \widehat \phi$ if and only if $P =  P_1 \comp \cdots \comp P_n$ is well-defined and $P \models \phi$.
	\end{lemma}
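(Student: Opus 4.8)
The plan is to realise $\widehat\phi$ as the pullback of $\phi$ along an \emph{MSO-interpretation} of the ipomset $P$ inside the word $w=P_1\dots P_n$, conjoined with an MSO test that $w$ is coherent. First I would isolate a predicate $\mathrm{coh}$ expressing that $T_{P_i}=S_{P_{i+1}}$ as conclists for every $i$; since $\Omega_{\le k}$ is finite this is a local condition on consecutive letters, expressible as a finite disjunction over the unary letter predicates $\discrete(x)$ of $\MSOWords$, so $\mathrm{coh}$ defines $\Coh_{\le k}$ restricted to the input words. On coherent $w$ the gluing $P=P_1*\dots*P_n$ is defined, with carrier $\bigcup_i U_i$ (the $U_i$ being the conclists of the $P_i$, identified along the shared interfaces), $\evord_P=\bigcup_i{\evord_{P_i}}$, and $<_P$ the transitive closure prescribed by the definition of $*$. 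I then set $\widehat\phi:=\mathrm{coh}\wedge\phi^{I}$, where $\phi^{I}$ is the syntactic translation described below; on non-coherent $w$ the conjunct $\mathrm{coh}$ already falsifies $\widehat\phi$, matching ``$P$ is not well-defined''.

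Next I would fix the encoding of events of $P$ by positions of $w$. Because $\wid(P)\le k$, at most $k$ events are simultaneously active, so I represent an event by a pair $(r,i)$ consisting of a position $i$ and a rank $r\in\{1,\dots,k\}$ in the event order of the conclist $U_i$; this is a $k$-copy MSO-interpretation in the sense of Courcelle~\cite{Courcelle90}. Each event is \emph{born} at a unique position: at position $1$ all events of $U_1$ are fresh, at a later starter $\starter{U_i}{A_i}$ the fresh events are those of $A_i$, and at a terminator none is fresh. Coherence guarantees that an event present at $i$ but not at $i-1$ is started by a starter $P_i$ and lies in $A_i$, so ``being born'' selects exactly one representative $(r,i)$ per event; I take as domain of the interpretation the (MSO-definable, again by case analysis over the finite alphabet) set of such birth-representatives.

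The technical heart is to recover $<_P$, $\evord_P$, $\lambda$, $\srci$, $\tgti$ on these representatives. Local data -- the label, source/target status, and the event order between two ranks at the \emph{same} position -- are read off immediately by case analysis on the finite alphabet. Relating two events across positions is the crux: I would define a \emph{threading} relation linking $(r,\ell)$ and $(r',\ell+1)$ whenever they denote the same surviving event under $P_\ell*P_{\ell+1}$ (a local, finite-alphabet condition matching ranks through $T_{P_\ell}=S_{P_{\ell+1}}$), and take its reflexive--transitive closure $\approx$, which is MSO-definable since transitive closure of an MSO-definable binary relation on a linear order is MSO-definable. Through $\approx$ each event's lifetime is a contiguous block of positions with a definable birth step and death step (the least and greatest position in its $\approx$-class). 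Then $e\evord_{P}e'$ iff their lifetimes share a position $\ell$ at which $\evord_{U_\ell}$ orders them accordingly -- any common position gives the same verdict, the event order being stable under gluing -- and $e<_P e'$ iff the death step of $e$ precedes the birth step of $e'$, which is exactly the relation produced by the iterated transitive closure in the definition of $*$ (the accumulated cross-terms make a dead event precede every strictly-later-born event, even across gaps). Finally $\srci(e)$ holds iff $e$ is born at position $1$ and is a source of $P_1$, and dually $\tgti(e)$ iff $e$ survives to position $n$ and is a target of $P_n$.

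I would then assemble $\phi^{I}$ by the standard backwards translation for finite-copy MSO-interpretations: relativise every first-order quantifier of $\phi$ to the $k$-copy domain (an $\exists x$ becomes a disjunction over the copy index together with an $\exists$ over positions), replace each second-order variable by $k$ position-set variables, and substitute the formulas above for the atoms $a(\cdot)$, $\srci(\cdot)$, $\tgti(\cdot)$, ${<}$, ${\evord}$. Since MSO is closed under such interpretations, $\phi^{I}\in\MSOWords$, and by construction $w\models\phi^{I}$ iff $P\models\phi$ for coherent $w$, giving the lemma. The main obstacle is the threading step and the verification that the resulting $<_P$ and $\evord_P$ coincide with those of iterated gluing -- in particular that the event order may be read at a single shared position, and that precedence is exactly ``one lifetime entirely before the other''; the remaining corner cases (redundant identity letters $\id_U$ with $U\ne\emptyset$, the exclusion of $\id_\emptyset$, and the source/target boundaries) are routine once the encoding is fixed.
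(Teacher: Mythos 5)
Your proposal is correct and takes essentially the same route as the paper's proof: a coherence check expressed as a finite disjunction over consecutive letters, a $k$-copy encoding of events as (position, rank) pairs threaded across consecutive letters by an MSO-definable transitive closure (the paper's relation $\sim$), and the same translations of the atomic formulas ($\evord$ via a shared position with compatible ranks, $<$ via one lifetime lying entirely before the other, labels and interfaces by case analysis on the finite alphabet). The only cosmetic difference is that you fix birth positions as canonical representatives for a Courcelle-style interpretation, whereas the paper lets variables denote arbitrary occurrences and quantifies over the $\sim$-class (existentially or universally, as appropriate) in each atomic case.
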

	
	We will treat the case of the empty ipomset $\id_\emptyset$ separately.
	We want $\widehat \phi$ to accept only coherent words. This is $\MSOWords$-definable by:
	\[
	\texttt{Coh}_k  \eqdef \forall x\, \forall y.\, x \to y \implies \!\!\bigvee_{P_1 P_2 \in \Coh_{\le k} \cap \Omega_{\le k}^2}\!\! P_1(x) \land P_2(y).
	\]
	That is, discrete ipomsets of $\Omega_{\le k}$ at consecutive positions $x$ and $y$ may be glued.

	We let $\widehat{\phi} \eqdef \texttt{Coh}_k \land \phi'$, where $\phi'$ is built by induction on $\phi$.
	Therefore, we have to consider formulas $\phi$ that contain free variables.
	The free variables of $\phi'$ will be all the free first-order variables of $\phi$ and
	second-order variables $X_1,\ldots,X_k$ for every free second-order variable $X$ of $\phi$. 
	\aasb{Intuitively, if $P  \models \phi$ then for every $w = P_1\dots P_n$ such that $P_1 * \dots * P_n = P$,
	when the free variable $x$ is interpreted as some event $e$ of $P$ in $\phi$ it is interpreted in $\phi'$ as some position in $w$ where $e$ occurs, and $X_i$ will contain positions of $w$  having as $i$th event according to $\evord$ the interpretation of some element of $X$}{New! I think this is what Ex.\ref{ex:msotohda} explains. So, is this explanation useful ?}.\ufsb{}{remove if no space?}
	
	To be precise, let $w = P_1\dots P_n \in \Coh_{\le k}$ and $P= P_1 * \dots * P_n$.
	Let $E = \{1,\ldots,n\} \times \{1,\ldots,k\}$.
	Our construction is built on a partial function $\evt : E \to P$ defined as follows:
	if $P_\ell$ consists of events $e_1 \evord \cdots \evord e_r$, then for every $i \le r$, $\evt(\ell,i) = e_i$.
	We sometimes abuse notation and write $\evt(P_\ell,i)$.
	Since $e \in P$ may occur in consecutive $P_\ell$ within $w$, one must determine when $\evt(\ell,i) = \evt(\ell',j)$.
	This can be done when $\ell' = \ell+1$ as follows.
	For all $i,j \le k$, let $M_{i,j} = \{P_1P_2 \in \Omega_{\leq k}^2 \mid \evt(1,i) = \evt(2,j) \}$.
	Then
	\[
	\glue i j (x,y) \eqdef x \rightarrow y \land \!\bigvee_{P_1P_2 \in M_{i,j}}\! P_1(x) \land P_2(y) \,.
	\]
	More generally,   let us define the equivalence relation $\sim$ on $E$ generated by $(\ell,i) \sim (\ell',i')$ if and only if $\glue i {i'} (\ell,\ell')$ holds.
	Then for all $(\ell_1,i),(\ell_2,j) \in E$,  $(\ell_1,i) \sim (\ell_2,j)$ if and only if $\evt(\ell_1,i) = \evt(\ell_2,j)$. 
	We have $(\ell,i) \sim (\ell',i')$ is MSO-definable (see Annex.\ref{sec:annexA}).

	Actually, we construct a formula $\phi'_\tau$  relative to a  function $\tau$ which  associates with every free first-order variable $x$ of $\phi$ some $\tau(x) \in \{1,\ldots,k\}$.	
	We sometimes leave $\tau$ implicit.
	Our aim is to have the following \emph{invariant property} at each step of the induction:
	$P \models_{\nu} \phi$ if and only if
	$w \models_{\nu'} \phi'$ for any valuations $\nu$, $\nu'$ satisfying the following:
	(1) $\evt(\nu'(x),\tau(x)) = \nu(x)$ and (2) $\bigcup_{1 \le i \le k} \{\evt(e,i) \mid e \in \nu'(X_i)\} 
	= \nu(X)$.

	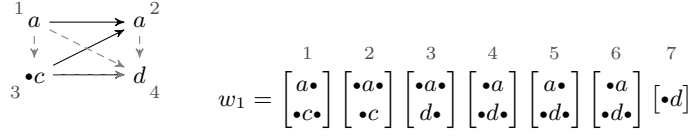
\begin{figure}[tbp]
	\centering
	\begin{subfigure}{0.2\linewidth}
		\begin{tikzpicture}
			\node (a) at (0.4,1.4) {$a$};
			\node (c) at (0.4,0.7) {$\ibullet c\vphantom{d}$};
			\node (b) at (1.8,1.4) {$a$};
			\node (d) at (1.8,0.7) {$d$};
			\path (a) edge (b);
			\path (c) edge (d);
			\path (c) edge (b);
			\path[densely dashed, gray]  (c) edge (d) (b) edge (d) (a) edge (d) (a) edge (c);
                        \node[font=\scriptsize, black!70] at (a.north west) {1};
                        \node[font=\scriptsize, black!70] at (b.north east) {2};
                        \node[font=\scriptsize, black!70] at (c.south west) {3};
                        \node[font=\scriptsize, black!70] at (d.south east) {4};
		\end{tikzpicture}
	\end{subfigure}
	\quad
	\begin{subfigure}{0.55\linewidth}
          \begin{tikzpicture}[x=.82cm]
            \path[use as bounding box] (-.45,0) -- (7,-.2);
            \foreach \i in {1,2,3,4,5,6} \node[font=\scriptsize, black!70] at (\i, 0) {\i};
            \node[font=\scriptsize, black!70] at (6.9, 0) {7};
          \end{tikzpicture}
          
		$w_1 =
		\bigloset{ a \ibullet \\ \ibullet c \ibullet} 
		\bigloset{\ibullet a \ibullet \\ \ibullet c}
		\bigloset{\ibullet a \ibullet \\ d \ibullet}
		\bigloset{\ibullet a \\ \ibullet d \ibullet}
		\bigloset{ a \ibullet \\ \ibullet d \ibullet}
		\bigloset{ \ibullet a \\ \ibullet d \ibullet}
		\bigloset{\ibullet d}
		$
                %
		%
	\end{subfigure}
	\caption{Ipomset and corresponding coherent words. (Numbers indicate positions.)}
	\label{fi:example}
	\end{figure}

	\begin{example}
		\label{ex:msotohda}
		Figure~\ref{fi:example} displays an ipomset $P$ and the coherent word $w_1 = P_1 \dots P_7$ such that $P_1 * \dots * P_7 = P$.
		Let $e_1,\dots,e_4$ be the events of $P$ labelled respectively by the left $a$, the right $a$, $c$, and $d$ and let $p_1, \dots,p_7$ the positions on $w_1$ from left to right.
		Assume that $P \models_\nu \phi(x,X)$ for some MSO-formula $\phi$  and the valuation $\nu(x) = e_1$ and $\nu(X) =\{e_2,e_3\}$.
		Then,  $w_1 \models_{\nu'}\phi'_{[x \mapsto 1]}(x,X_1,X_2)$ when, for example, $\nu'(x) = p_2$, $\nu'(X_1) = \{p_6\}$ and $\nu'(X_2) = \{p_3\}$ since this valuation satisfies the invariant property. 
		For $\sim$ we have 
		$(p_1,1) \sim \dots \sim (p_4,1)$, $(p_1,2) \sim (p_2,2)$, $(p_3,2) \sim \dots \sim (p_6,2) \sim (p_7,1)$ and $(p_5,1) \sim (p_6,1)$.
		In particular $(p_1,1)\not\sim (p_5,1)$ since neither $\glue 1 1 (p_4,p_5)$ nor $\glue 2 1 (p_4,p_5)$ hold.
	\end{example}

	We are now ready to build $\phi'$ by induction on $\phi$.
	When $\phi$ is $\psi_1 \lor \psi_2$ or $\neg \psi$, then we let $\phi'$ be $\psi'_1 \lor \psi'_2$ or $\neg \psi'$, respectively.
	For $\phi = \exists X\, \psi$ we let $\phi'\eqdef \exists X_1,\ldots, X_k. \psi'$.
	The function~$\tau$ emerges in the case $\phi = \exists x\, \psi$,
	where we let $\phi'_\tau \eqdef \bigvee_{1 \le i \le k} \exists x\, \psi'_{[x \mapsto i]}$.
	When  $\phi = x \in X$, we let
	\[
	\phi'_{[x \mapsto i]} \eqdef \textstyle \bigvee_{1 \le j \le k} \exists y\,
	(x,i) \sim (y,j) \land y \in X_j
	\]
	For $\phi = \srci(x)$, we let 
	$
	\phi'_{[x \mapsto i]} \eqdef \textstyle \bigwedge_{1 \le j \le k} \forall y\, (x,i) \sim (y,j) \implies \srci(y,j)$,
        where $\srci(y,j)$ is defined as the disjunction of all $\discrete(y)$\aasb{}{$\discrete$ for discrete ?  A for antichains ? Macro} where  $\evt(\discrete,j) \in S_\discrete$.
	We define $\phi'_{[x \mapsto i]}$ similarly when $\phi = \tgti(x)$.
	For $\phi = x < y$ we let 
	\[
	\phi'_{[x \mapsto i, y \mapsto j]} \eqdef \!\textstyle \bigwedge_{1 \le i',j' \le k}\! \forall x',y'.
	\big( (x',i') \sim (x,i) \land (y',j') \sim (y,j) \big) \implies x' < y'.
	\]
	For $\phi = x \evord y$ we let
	\[
	\phi'_{[x \mapsto i, y \mapsto j]} \eqdef \!\textstyle \bigvee_{1 \le i' < j' \le k}\! \exists z\, 
	(z,i') \sim (x,i) \land (z,j') \sim (y,j). 
	\]
	Finally, when $\phi = a(x)$, then we let $\phi'_{[x \mapsto i]}$ be the disjunction of all $\discrete(x)$ where $\evt(\discrete,i)$ is labelled by $a$.
	As a consequence, we obtain:
	
	\begin{proposition}
		Let $\phi$ be an MSO sentence over\/ $\iiPoms$, $k\in \Nat$, and
		$
		L = \{ P \in \iiPoms_{\le k} \mid P \models \phi \} \down
		$.
		Then $L$ is regular.
	\end{proposition}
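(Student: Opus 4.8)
The plan is to read this proposition off of Lemma~\ref{lem:msodec} together with two imported results: the classical Büchi--Elgot--Trakhtenbrot theorem over finite words, and the correspondence of \cite{amrane.23.ictac} between regular word languages of step decompositions and regular ipomset languages. The heavy lifting is the inductive construction of $\phi'$ preceding the statement, which shows that the sparse step decomposition can be captured in MSO; given that, the proposition is essentially an assembly of cited machinery.

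First I would apply Lemma~\ref{lem:msodec} to $\phi$ and $k$ to obtain $\widehat\phi \in \MSOWords$. Since $\Omega_{\le k}$ is a finite alphabet, the Büchi--Elgot--Trakhtenbrot theorem \cite{Buechi60,Elgot1961,Trakhtenbrot62} yields a finite word automaton recognising $W = \{w \in \Omega_{\le k}^* \mid w \models \widehat\phi\}$, and $W$ is regular. Because $\widehat\phi$ contains the conjunct $\texttt{Coh}_k$, every $w \in W$ is coherent, so $W \subseteq \Coh_{\le k}$; intersecting with the regular language $(\Omega_{\le k}\setminus\{\id_\emptyset\})^+$ I may further assume $W \subseteq (\Omega_{\le k}\setminus\{\id_\emptyset\})^+$ without losing regularity. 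By Lemma~\ref{lem:msodec}, a word $P_1\dots P_n$ lies in this $W$ exactly when $P_1 * \dots * P_n$ is defined and satisfies $\phi$. Hence $W$ is precisely the set of all step decompositions avoiding $\id_\emptyset$ of the ipomsets in $\{P \in \iiPoms_{\le k} \mid P \models \phi\}$; note that, since membership in $W$ depends only on the glued ipomset, $W$ contains \emph{every} such decomposition of each satisfying $P$, not merely a distinguished one, which is what the correspondence below requires.

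I would then invoke the correspondence of \cite{amrane.23.ictac}: whenever the set of all step decompositions of an ipomset language forms a regular word language over $\Omega_{\le k}$, the subsumption closure of the associated language of gluings is regular. Applied to $W$, this gives that $\{P \in \iiPoms_{\le k} \mid P \models \phi\}\down = L$ is regular; by Lemma~\ref{le:ipomsparse} every width-$\le k$ ipomset indeed admits such a decomposition, so the gluing map out of $W$ is onto the intended language up to the empty ipomset. The one delicate point I anticipate is exactly this bookkeeping at the boundary: Lemma~\ref{lem:msodec} is stated only for words in $(\Omega_{\le k}\setminus\{\id_\emptyset\})^+$, so the empty ipomset $\id_\emptyset$ (and the degenerate one-letter decompositions) must be reconciled with the general downward-closure correspondence. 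I would handle $\id_\emptyset$ separately, checking directly whether $\id_\emptyset \models \phi$ and, if so, adjoining the regular singleton language $\{\id_\emptyset\}$ to the construction.
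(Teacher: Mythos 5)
Your proof is correct, and it shares its first half with the paper's: both apply Lem.~\ref{lem:msodec} and then B\"uchi's theorem to get a regular word language over $\Omega_{\le k}$ describing the step decompositions of $K = \{P \in \iiPoms_{\le k} \mid P \models \phi\}$. You diverge at the transfer back from words to ipomsets. The paper continues with Kleene's theorem for \emph{words} to extract a rational expression for $L'' = \{P_1\dots P_n \in \Omega_{\le k}^+ \mid P_1 * \dots * P_n \in K\}$ built from $\cup$, $\cdot$ and $^+$, reinterprets word concatenation as gluing (recalling that the rational operations on ipomset languages build in subsumption closure), and concludes that $L$ is rational, hence regular by Thm.~\ref{th:kleene}; note that by working with $L''$ over the full alphabet $\Omega_{\le k}^+$, including $\id_\emptyset$ letters, the paper absorbs the empty-ipomset corner case into the rational expression rather than treating it separately. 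You instead invoke the result of \cite{amrane.23.ictac} that a regular word language consisting of all step decompositions of a set of ipomsets has a regular subsumption closure of gluings --- which is exactly the route announced in the paper's introduction, though not the one taken in the body. Each choice has its merits: the paper's argument stays within machinery already stated in the paper (Thm.~\ref{th:kleene}) and needs no external correspondence, while yours is shorter, avoids the detour through rational expressions, and is more explicit about the two boundary issues the paper glosses over --- that the cited correspondence needs \emph{all} decompositions of each satisfying ipomset (which your $W$ indeed contains, since membership depends only on the glued ipomset), and that $\id_\emptyset$ must be adjoined by hand, which is sound because $\id_\emptyset$ subsumes and is subsumed only by itself and $\{\id_\emptyset\}$ is regular.
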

	
	\begin{proof}
		Let $K = \{ P \in \iiPoms_{\le k} \mid P \models \phi \}$.
		By Lem.~\ref{lem:msodec}, $L' = \{P_1\dots P_n \in (\Omega_{\le k} \setminus \{\id_\emptyset\})^+ \mid P_1 * \dots * P_n \in K\}$ is $\MSOWords$-definable, and thus so is $L'' = \{P_1\dots P_n \in  \Omega_{\le k}^+ \mid P_1 * \dots * P_n \in K\}$.
		By the standard Büchi and Kleene theorems,
                $L''$ is obtained from $\emptyset$ and $\Omega_{\le k}$ using $\cup$, $\cdot$ and $^+$.
                Replacing concatenation of words by gluing composition,
                we see that $L$ is rational and thus regular by Thm.~\ref{th:kleene}. \qed
	\end{proof}
	
	\section{From HDAs to MSO}\label{sec:HDAs-to-MSO}
        
	In this section we prove the second assertion of Thm.~\ref{th:main}. 
	The proof adapt the classical construction, encoding accepting paths of an automaton, to the case of HDAs.
	Our construction relies on the uniqueness of the sparse step decomposition (Lem.~\ref{le:ipomsparse}) and the MSO-definability of the relation: ``an event is started/terminated before another event is started/terminated'' in a sparse step decomposition (Lem.~\ref{lem:compare} below).
	
	More formally, let $P \in \iiPoms$, then $P$ admits a unique sparse step decomposition
	$P = P_1 \comp \cdots \comp P_n$.
	Given $e \in P \setminus S_P$, we denote by $\xstart e$ the step where $e$ is started in 
	the decomposition, \ie the minimal $i$ such that $e \in P_i$.
	For $e \in P \setminus T_P$,  we similarly denote by $\xend e$ the step where $e$ is terminated.
	For $x\in S_P$ we let $\xstart x = -\infty$ and for $x\in T_P$, $\xend x = +\infty$.
	Then $P_i$ contains precisely all~$e \in P$ such that $\xstart e \le i \le \xend e$, that is
        all events which are started before or at $P_i$ (or never) and are terminated after or at $P_i$ (or never).
	In particular, if $P_i$ is a
	starter, then it starts all $e$ such that $\xstart e = i$, and if it is a terminator, it terminates
	all $e$ such that $\xend e = i$.
        Note that $\St(e)<\Te(e)$ for all $e\in P$.

	\begin{example}
		\label{ex:stend}
                \quad
		Proceeding with Ex.~\ref{ex:msotohda},
                let
		$w_2 = P_1\dots P_6=
		\loset{\hphantom{\ibullet} a \ibullet \\ \ibullet c \ibullet} 
		\loset{\ibullet a \ibullet \\ \ibullet c \hphantom{\ibullet}}
		\loset{\ibullet a \ibullet \\ \hphantom{\ibullet} d \ibullet}
		\loset{\ibullet a \hphantom{\ibullet}\\ \ibullet d \ibullet}$
		$\loset{\hphantom{\ibullet} a \ibullet \\ \ibullet d \ibullet}
		\loset{ \ibullet a \\ \ibullet d }$
        be the sparse step decomposition of $P$ (see also Ex.~\ref{ex:paths}).
		We have $\xstart {e_3} = -\infty, \xstart {e_1} = 1, \xstart {e_4} = 3$ and $\xstart {e_2} = 5$.
		Also, $\xend {e_3} = 2, \xend {e_1} = 4$ and $\xend {e_2} = \xend {e_4} = 6$.
		Further, $P_1$ contains $e_1$ since $\xstart {e_1} = 1$ and $e_3$ because $\xstart {e_3} \le 1 \le \xend {e_3}$;
		$P_4$ contains $e_1$ since $\xend {e_1} =4$ and $e_4$ because $\xstart {e_4} \le 4 \le \xend {e_4}$. 
	\end{example}

	The next lemma describes the existence of an accepting path inducing a sparse step decomposition as the existence of 
	labellings $\rup$ and $\rdwn$ mapping each started or terminated event of $P$ to the upstep or downstep of the HDA performing it.
	
	\begin{lemma}\label{lem:acceptingrun}
		Let $\cal H$ be an HDA and $P \in \iiPoms \setminus \Id$ whose sparse step decomposition is $P_1*\dots*P_n$. We have 
		$P \in \Lang(\mathcal{H}) $ 
		if and only if 
		there exist $\rup :  P \setminus S_P \to \upsteps{X}$ and $\rdwn : P \setminus T_P \to  \downsteps{X}$ such that, for all $e_1,e_2 \in P$: 
		\begin{enumerate}
			\item \label{e:ar.st} if $\xstart {e_1} = \xstart {e_2}$ then $\rup(e_1) = \rup(e_2)$;
			\item if $\xend {e_1} = \xend {e_2}$ then $\rdwn({e_1}) = \rdwn({e_2})$;
			\item if $\xstart {e_2} = \xend {e_1} + 1$ then $\src(\rup({e_2})) = \tgt(\rdwn({e_1}))$;
			\item \label{e:ar.te1} if $\xend {e_2} = \xstart {e_1} + 1$ then $\src(\rdwn({e_2})) = \tgt(\rup({e_1}))$;
			\item \label{e:ar.rup} if $\rup({e_1}) = (p,\arrO{A},q)$ then
			\begin{align*}
				A= &(U = \{e \mid \xstart e = \xstart{e_1}\},\evord_{P\rest{U}},\lambda_{P\rest{U}}),\\
				\ev(q) =& (V = \{e \mid \xstart e \le \xstart {e_1} < \xend e\},\evord_{P\rest{V}},\lambda_{P\rest{V}});
			\end{align*}
			\item \label{e:ar.rdwn} if $\rdwn({e_1}) = (p,\arrI{A},q)$ then
			\begin{align*}
				A =& (U = \{e \mid \xend e = \xend{{e_1}}\},\evord_{P\rest{U}},\lambda_{P\rest{U}}),\\
				\ev(p) =& (V = \{e \mid \xstart e < \xend {e_1} \le \xend e\},\evord_{P\rest{V}},\lambda_{P\rest{V}});
			\end{align*}
			\item \label{e:ar.mins} if $\xstart {e_1} = 1$ 
			then $\src(\rup({e_1})) \in \bot_{\mathcal{H}}$;
			\item if $\xend {e_1}= 1$
			then $\src(\rdwn({e_1})) \in \bot_{\mathcal{H}}$;
			\item if $\xstart {e_1} = n$ 
			then $\tgt(\rup({e_1})) \in \top_{\mathcal{H}}$;
			\item \label{e:ar.maxe} if $\xend {e_1} = n$ 
			then $\tgt(\rdwn({e_1})) \in \top_{\mathcal{H}}$.
		\end{enumerate}
	\end{lemma}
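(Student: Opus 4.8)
The plan is to prove both implications by relating the pair $(\rup,\rdwn)$ to an accepting path written in \emph{sparse (alternating) form}. The key preliminary observation is that, thanks to path equivalence (which preserves event ipomsets) together with the uniqueness of the sparse step decomposition (Lem.~\ref{le:ipomsparse}), every accepting path $\alpha$ with $\ev(\alpha)=P$ can be rewritten into an alternating path $\beta=(r_0,\phi_1,r_1,\dots,\phi_n,r_n)$ with exactly $n$ steps, where the $i$th step $\phi_i$ is an upstep when $P_i$ is a starter and a downstep when $P_i$ is a terminator. Indeed, fully merging consecutive upsteps and consecutive downsteps via path equivalence yields an alternating path whose event ipomset is still $P$; being an alternating gluing of starters and terminators, it is \emph{a} sparse step decomposition of $P$, hence equal to $P_1*\dots*P_n$ by uniqueness. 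The whole proof then amounts to establishing a correspondence between such alternating accepting paths $\beta$ and the admissible pairs $(\rup,\rdwn)$.

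For the forward direction I would start from such a $\beta$ and set $\rup(e)$ to be $\phi_{\xstart e}$ and $\rdwn(e)$ to be $\phi_{\xend e}$; these are well defined precisely because $\xstart e\in\{1,\dots,n\}$ iff $e\notin S_P$ and $\xend e\in\{1,\dots,n\}$ iff $e\notin T_P$, matching the domains $P\setminus S_P$ and $P\setminus T_P$. Conditions (1)--(2) are immediate, since two events sharing a start (resp.\ end) step are assigned the same step of $\beta$. The gluing conditions (3)--(4) follow from the fact that consecutive steps of $\beta$ meet at the common cell $r_i=\tgt(\phi_i)=\src(\phi_{i+1})$. Conditions (5)--(6) merely transcribe how the starter (resp.\ terminator) $P_i$ determines the set of events it starts (resp.\ terminates) and the conclist $\ev$ of the adjacent cell, read directly off the sparse decomposition. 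Finally (7)--(10) record exactly the acceptance requirements $r_0=\src(\beta)\in\bot_{\mathcal H}$ and $r_n=\tgt(\beta)\in\top_{\mathcal H}$.

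For the converse, given $\rup,\rdwn$ satisfying (1)--(10), I would reconstruct $\beta$ stepwise. For each $i\in\{1,\dots,n\}$, the step $P_i$ is a non-identity starter or terminator (the decomposition is sparse and $P\notin\Id$), so at least one event $e$ has $\xstart e=i$ or $\xend e=i$; by (1)--(2) all such events share a single upstep or downstep, which I take as $\phi_i$, and (5)--(6) guarantee that $\phi_i$ carries exactly the label set $A$ and adjacent conclist making its event ipomset equal to $P_i$. Because starters and terminators alternate, each consecutive pair $(\phi_i,\phi_{i+1})$ falls under exactly one of (3)--(4), which forces $\tgt(\phi_i)=\src(\phi_{i+1})$; the steps therefore concatenate into a genuine path $\beta$ with $\ev(\beta)=P_1*\dots*P_n=P$. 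Conditions (7)--(10) then place $\src(\beta)\in\bot_{\mathcal H}$ and $\tgt(\beta)\in\top_{\mathcal H}$, so $\beta$ is accepting and $P\in\Lang(\mathcal H)$.

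I expect the main obstacle to lie in the bookkeeping of the converse direction: one must check that the independently chosen upsteps and downsteps actually fit together as cells and face maps of the precubical set, i.e.\ that the intermediate cell forced by (3)--(4) is consistent with the face-map data demanded by (5)--(6), and that the event ipomset of the assembled path is \emph{precisely} the sparse decomposition of $P$ rather than some ipomset merely subsuming it. It is the uniqueness of the sparse step decomposition, again, that pins this down.
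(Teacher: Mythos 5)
Your proposal is correct and follows essentially the same route as the paper: the equivalence is established by translating back and forth between the labellings $(\rup,\rdwn)$ and an accepting path in alternating (sparse) form, with uniqueness of the sparse step decomposition (Lem.~\ref{le:ipomsparse}) pinning the steps of that path to $P_1,\dots,P_n$, exactly as you do. The one detail worth patching in your forward direction is that merging consecutive same-type steps via path equivalence does not eliminate isolated identity steps $\arrO{\emptyset}$ (the generating relations never relate a path containing an arrow to a shorter one lacking it), so such steps must be deleted by a separate, routine surgery before uniqueness of the sparse decomposition can be invoked.
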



        As $P \not\in \Id$, $\rup$ or $\rdwn$ must be defined for at least one element of~$P$ above.
	
	Our goal is to show that the conditions given by Lem.~\ref{lem:acceptingrun} can be
	expressed in MSO. We want to define a formula 
	$\exists X_1 \ldots \exists X_m.\, \exists Y_1 \ldots \exists Y_n.\, \phi$
	with one $X_i$ (resp.~$Y_j$) for each upstep (resp.~downstep) of the HDA.
	Intuitively, each $X_i$ ($Y_j$) will contain all the events started (terminated) by performing the corresponding upstep (downstep).
	The sentence $\phi$ expresses that each event belongs to exactly one $X_i$ (unless it is a source, in
	which case it belongs to none) and one $Y_i$ (unless it is a target),
        and that the resulting labellings $\rup$ and $\rdwn$ satisfy the conditions of the lemma.
	Hence, identity events do not belong to any $X_i$ or $Y_j$.
	Nevertheless, conditions \ref{e:ar.rup} and \ref{e:ar.rdwn} ensure that they are consistent with the encoded  path.
	Let us first prove that the relations used in Lem.~\ref{lem:acceptingrun} are MSO-definable.

	\begin{lemma}
		\label{lem:compare}
		For $f,g \in \{\mathsf{St},\mathsf{Te}\}$ and ${\bowtie} \in \{ {=}, {<}, {>}\}$,
		the relations $f(x) \bowtie g(y)$, $\min(f)$ and $\max(f)$ are MSO-definable.
	\end{lemma}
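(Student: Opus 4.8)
The plan is to express every relation $f(x)\bowtie g(y)$ and the four predicates $\min(f),\max(f)$ (for $f,g\in\{\mathsf{St},\mathsf{Te}\}$) by a first-order formula over $<,\evord,\srci,\tgti$, which in particular yields MSO-definability. Everything rests on a single \emph{bridge} identity,
\[
	x<y \iff \xend{x}<\xstart{y},
\]
read with the conventions $\xstart{e}=-\infty$ for $e\in S_P$ and $\xend{e}=+\infty$ for $e\in T_P$. I would prove it from the description of the steps recalled above, namely $P_i=\{e\mid\xstart{e}\le i\le\xend{e}\}$ with each $P_i$ discrete and $P=P_1*\dots*P_n$. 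For the forward direction, in the gluing the order $<_P$ is generated by the pairs $(P_i\setminus T_{P_i})\times(P_{i+1}\setminus S_{P_{i+1}})$; such a pair relates some $u$ terminated at step $i$ (so $\xend{u}=i$) to some $v$ started at step $i+1$ (so $\xstart{v}=i+1$), hence $\xend{u}<\xstart{v}$. Following a $<_P$-chain from $x$ to $y$, each link contributing $\xend{w_j}<\xstart{w_{j+1}}$ and each intermediate event contributing $\xstart{w}<\xend{w}$, then telescopes to $\xend{x}<\xstart{y}$. For the converse, $\xend{x}<\xstart{y}$ means $x$ and $y$ never share a step; since $\evord_P=\bigcup_i\evord_{P_i}$ only relates events sharing a step, $x,y$ are not $\evord$-comparable, hence $<$-comparable by the exactly-one-relation axiom, and the forward direction rules out $y<x$.

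Since every step is \emph{either} a starter \emph{or} a terminator, the sets of values taken by $\xstart{\cdot}$ and by $\xend{\cdot}$ are disjoint; thus $\xstart{x}=\xend{y}$ can never hold, and the \emph{cross} comparisons reduce immediately to $<$ via the bridge:
\begin{align*}
	\xstart{x}>\xend{y} &\iff y<x, & \xstart{x}<\xend{y} &\iff \neg(y<x),\\
	\xend{x}<\xstart{y} &\iff x<y, & \xend{x}>\xstart{y} &\iff \neg(x<y),
\end{align*}
with $\xstart{x}=\xend{y}$ and $\xend{x}=\xstart{y}$ always false. For the \emph{same-type} strict comparisons I would use strict alternation of starters and terminators (Lem.~\ref{le:ipomsparse}): two distinct starter steps cannot be adjacent, so $\xstart{x}<\xstart{y}$ holds iff some terminator step lies strictly between them, i.e.\ there is $z$ with $\xstart{x}<\xend{z}<\xstart{y}$. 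Rewriting the two cross comparisons and adding the degenerate source case gives
\[
	\xstart{x}<\xstart{y}\iff\bigl(\srci(x)\wedge\neg\srci(y)\bigr)\vee\bigl(\neg\srci(x)\wedge\neg\srci(y)\wedge\exists z\,(\neg(z<x)\wedge z<y)\bigr),
\]
and dually, inserting a starter step between two terminator steps,
\[
	\xend{x}<\xend{y}\iff\bigl(\neg\tgti(x)\wedge\tgti(y)\bigr)\vee\bigl(\neg\tgti(x)\wedge\neg\tgti(y)\wedge\exists z\,(x<z\wedge\neg(y<z))\bigr).
\]
Equality and the reversed relations then follow from the trichotomy of the totally ordered values of $\xstart{\cdot}$ (resp.\ $\xend{\cdot}$): $f(x)=f(y)$ is $\neg(f(x)<f(y))\wedge\neg(f(y)<f(x))$ and $f(x)>f(y)$ is $f(y)<f(x)$, settling every case $f(x)\bowtie g(y)$.

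Finally, $\min(f)$ and $\max(f)$ assert that $f(x)$ is the least, resp.\ greatest, \emph{actual} step index, which I would define by universally quantifying the already-defined comparisons over all events while discarding the infinite interface values; for instance
\[
	\min(\mathsf{St})(x)\eqdef\neg\srci(x)\wedge\forall z\,\bigl((\neg\srci(z)\implies\neg(\xstart{z}<\xstart{x}))\wedge(\neg\tgti(z)\implies\neg(\xend{z}<\xstart{x}))\bigr),
\]
and analogously for $\min(\mathsf{Te}),\max(\mathsf{St}),\max(\mathsf{Te})$, replacing the inner comparison appropriately and reversing its direction for $\max$. The main obstacle I anticipate is establishing the bridge identity and the two ``intermediate step'' characterisations correctly, together with the careful bookkeeping of the $\pm\infty$ conventions for sources and targets; once these are in place, the remaining cases are pure Boolean and quantifier manipulation. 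As a by-product, all the defining formulas are first-order, which is stronger than the claimed MSO-definability.
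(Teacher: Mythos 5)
Your proposal is correct and follows essentially the same route as the paper: it defines the cross comparison $\xend x < \xstart y$ as $x<y$, obtains $\xstart x < \xend y$ by negation, expresses same-type comparisons $\xstart x < \xstart y$ (resp.\ $\xend x < \xend y$) via an intermediate terminated (resp.\ started) event using sparseness/alternation, and handles equality, $\min$ and $\max$ by Boolean combinations and non-existence of smaller/larger values. The only differences are elaborations, not a change of method: you additionally prove the bridge identity from the gluing construction (which the paper takes as immediate from the semantics) and you track the $\pm\infty$ interface conventions with explicit $\srci/\tgti$ disjuncts, whereas the paper instead bakes source/target exclusions into its equality and $\min/\max$ formulas.
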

	
	\begin{proof}
		We first define $\xend x < \xstart y$ as the formula $x < y$, together with
		$\xstart x < \xend y \eqdef \lnot (\xend y < \xstart x)$.
		Because starters and terminators alternate in the sparse step decomposition,
		we can then let
		\begin{align*}
			\xstart x < \xstart y
			& \eqdef  \exists z.\, \xstart x < \xend z \land \xend z < \xstart y, \\
			\xstart x = \xstart y
			& \eqdef \lnot (\xstart x < \xstart y) \land \lnot (\xstart y  < \xstart x) \land \lnot \srci(x) \land \lnot \srci(y) \\
			\min(\xstart x) & \eqdef  \lnot \srci(x) \land \lnot \exists y.\, \xend y < \xstart x\\
			\max(\xend x) & \eqdef  \lnot \tgti(x) \land \lnot \exists y.\, \xstart y > \xend x \,.
		\end{align*}
                The other formulas
                are defined similarly. \qed
	\end{proof}

        We can also define
        $\xstart y = \xend x + 1$ and $\xend y = \xstart x + 1$ using standard techniques.
	Observe that $\xend x < \xstart y$ implies $\lnot t(x) \land \lnot \srci(y)$,
        given that the end of the $x$-event precedes the beginning of the $y$-event.
	As a consequence $\xstart x < \xstart y$ implies $\lnot \srci(y)$.
	On the other hand $\xstart x < \xend y$  holds in particular  when $x$ or $y$ are interpreted as identities.

	\begin{proposition}
		\label{th:hdatomso}
		Given an HDA $\mathcal{H}$, one can construct an MSO sentence $\phi$ such that
		$\Lang(\mathcal{H}) = \{ P \in \iiPoms \mid P \models \phi \}$.
	\end{proposition}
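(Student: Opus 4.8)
The plan is to translate the combinatorial characterisation of Lemma~\ref{lem:acceptingrun} directly into MSO. Since $\mathcal{H}$ is finite, enumerate its upsteps $\upsteps{\mathcal{H}}=\{u_1,\dots,u_m\}$ and downsteps $\downsteps{\mathcal{H}}=\{d_1,\dots,d_n\}$. A labelling $\rup\colon P\setminus S_P\to\upsteps{\mathcal{H}}$ is the same thing as a partition of $P\setminus S_P$ into the preimages $\rup^{-1}(u_i)$, so I would encode $\rup$ and $\rdwn$ by second-order variables $X_1,\dots,X_m$ and $Y_1,\dots,Y_n$, with intended reading $e\in X_i\iff\rup(e)=u_i$ and $e\in Y_j\iff\rdwn(e)=d_j$, and set
\[
\phi_{\neq\Id}\eqdef \exists X_1\cdots\exists X_m\,\exists Y_1\cdots\exists Y_n.\ \big(\mathsf{part}\land\textstyle\bigwedge_{c=1}^{10}\psi_c\big),
\]
where $\mathsf{part}$ asserts that $\rup,\rdwn$ are well-defined functions (each non-source event lies in exactly one $X_i$ and no source event in any $X_i$, using the definable equality $x=y$ and the predicate $\srci$, dually for the $Y_j$ and $\tgti$), and each $\psi_c$ encodes condition~$c$ of the lemma.

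Every condition refers only to the step indices $\St(e),\Te(e)$ (well-defined by the uniqueness of sparse step decompositions, Lem.~\ref{le:ipomsparse}) and to the fixed combinatorial data of the steps, so it becomes a first-order formula over the relations of Lemma~\ref{lem:compare} and the definable successor relations $\St(y)=\Te(x)+1$, $\Te(y)=\St(x)+1$. For instance, condition~\ref{e:ar.st} is $\forall x\,\forall y.\,\St(x)=\St(y)\implies\bigwedge_i(x\in X_i\iff y\in X_i)$, and condition~\ref{e:ar.te1} is $\forall x\,\forall y.\,\Te(y)=\St(x)+1\implies\bigvee_{(i,j)\,:\,\src(d_j)=\tgt(u_i)}(x\in X_i\land y\in Y_j)$; the boundary conditions~\ref{e:ar.mins}--\ref{e:ar.maxe} use the minimality and maximality predicates of Lemma~\ref{lem:compare}, e.g.\ condition~\ref{e:ar.mins} is $\forall x.\,\min(\St(x))\implies\bigvee_{i\,:\,\src(u_i)\in\bot_{\mathcal{H}}}x\in X_i$. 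A pleasant point is that the comparison relations of Lemma~\ref{lem:compare} already build in $\lnot\srci$ (resp.\ $\lnot\tgti$) on both sides, so the quantifiers automatically range over exactly the events on which $\rup$ (resp.\ $\rdwn$) is defined, and source/target events cause no trouble.

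The step I expect to be the main obstacle is encoding the consistency conditions~\ref{e:ar.rup} and~\ref{e:ar.rdwn}, which are not about indices but assert an \emph{equality of conclists}: if $e_1\in X_i$ and $u_i=(p,\arrO{A},q)$, then the events starting at step $\St(e_1)$ must form, with their induced event order and labels, exactly the conclist $A$, while the events active just after must form $\ev(q)$. Because $\mathcal{H}$ is finite, $A$ and $\ev(q)$ are fixed finite conclists, say $A=(a_1\evord\cdots\evord a_r)$, so the conclist isomorphism is first-order expressible; the part for $A$ is
\[
\exists z_1\cdots\exists z_r.\ (z_1\evord\cdots\evord z_r)\land\!\bigwedge_{1\le\ell\le r}\!\big(\St(z_\ell)=\St(x)\land a_\ell(z_\ell)\big)\land\forall w.\,\big(\St(w)=\St(x)\implies\textstyle\bigvee_\ell w=z_\ell\big),
\]
and the part for $\ev(q)$ is analogous, using $\St(z_\ell)\le\St(x)<\Te(z_\ell)$ in place of $\St(z_\ell)=\St(x)$. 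Writing $\Phi_i(x)$ for the conjunction of these two formulas, $\psi_5\eqdef\forall x.\,\bigwedge_i(x\in X_i\implies\Phi_i(x))$ and dually $\psi_6$, completes the non-identity part; by Lemma~\ref{lem:acceptingrun}, $P\models\phi_{\neq\Id}$ iff $P\in\Lang(\mathcal{H})$ for every $P\notin\Id$.

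Finally I would treat the corner case $P\in\Id$, which Lemma~\ref{lem:acceptingrun} excludes (note that $\phi_{\neq\Id}$ would otherwise hold vacuously on every identity, so a guard is genuinely needed). Since starting then terminating an event never yields an identity and trivial steps are self-loops, an identity $P$ is accepted exactly when there is a cell $q\in\bot_{\mathcal{H}}\cap\top_{\mathcal{H}}$ with $\ev(q)=P$ as a conclist. Writing $\mathsf{isId}\eqdef\forall x.\,(\srci(x)\land\tgti(x))$ --- which forces ${<}=\emptyset$ by the ipomset axioms, hence $P\in\Id$ --- I set $\phi_{\Id}\eqdef\mathsf{isId}\land\bigvee_{q\in\bot_{\mathcal{H}}\cap\top_{\mathcal{H}}}\mathsf{cl}_{\ev(q)}$, where $\mathsf{cl}_C$ is a first-order formula of the same shape as the display above, ranging over the whole domain, stating that $P$ is isomorphic to the conclist $C$ (the empty conclist being handled by $\mathsf{cl}_\emptyset\eqdef\lnot\exists x.\,(x=x)$). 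The required sentence is then $\phi\eqdef(\lnot\mathsf{isId}\land\phi_{\neq\Id})\lor\phi_{\Id}$, which by construction satisfies $\Lang(\mathcal{H})=\{P\in\iiPoms\mid P\models\phi\}$.
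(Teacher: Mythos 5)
Your proposal is correct and follows essentially the same route as the paper: existentially quantified set variables $X_i$, $Y_j$ encoding $\rup$ and $\rdwn$ per upstep/downstep, the conditions of Lemma~\ref{lem:acceptingrun} translated via the definable relations of Lemma~\ref{lem:compare}, and identities treated separately as conclists of cells in $\bot_{\mathcal{H}}\cap\top_{\mathcal{H}}$. The only (harmless) deviation is cosmetic: the paper guards the two cases by implications inside one sentence and patches in $\id_\emptyset$ with an extra disjunct $\neg\exists x.\,\mathrm{true}$ when needed, whereas you fold $\id_\emptyset$ uniformly into $\phi_{\Id}$ via $\mathsf{cl}_\emptyset$.
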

	
	\begin{proof}
		We define
		\begin{align*} 
			\phi &\eqdef  (\exists x.\, \neg \srci(x) \lor \neg \tgti(x)) \implies  \exists X_1, \dots, X_m.\, \exists Y_1, \dots, Y_n.\,
                        \!\textstyle\bigwedge_{i=0,\dots,\ref{e:ar.maxe}}\! \phi_i
                        \\
			& \land (\forall y.\ \srci(y) \land \tgti(y)) \implies  \textstyle \bigvee_{\substack{p \in \bot_\mathcal{H} \cap \top_\mathcal{H} \\ \ev(p) \neq \emptyset}} \exists y_1,\dots,y_{|\ev(p)|}. \ev(p)(y_1,\dots,y_{|\ev(p)|}).
		\end{align*}
		where  $\phi_0$ checks that the $X_i$'s and $Y_i$'s define labellings $\rup$ and
		$\rdwn$ as in Lem.~\ref{lem:acceptingrun}, that is, each event belongs to at most one $X_i$ (is associated with at most one upstep) and one $Y_i$, and to no
		$X_i$ iff it is a source and to no $Y_i$ iff it is a target.
                The other formulas $\phi_i$ check condition $i$ of Lem.~\ref{lem:acceptingrun}.
		The  second line of $\phi$ is  satisfied by all non-empty identities accepted by $\mathcal{H}$.
		Thus $L(\phi) = L(\mathcal{H}) \setminus \{\id_\emptyset\}$.
		If $\id_\emptyset \in L(\mathcal{H})$ then $L(\mathcal{H}) = L(\phi \lor  \neg \exists x.\ \texttt{true})$.
	\end{proof}
	
	
	\section{Conclusion}
		
	This paper enriches the language theory of higher-dimensional automata with a Büchi-Elgot-Trakhtenbrot-like theorem.
	We have shown that the subsumption closures of MSO-definable subsets of $\iiPoms_{\le k}$ are regular and that regular languages of HDAs are MSO-definable, both with effective constructions.
	Also, the MSO theory of $\iiPoms_{\le k}$ and the MSO model-checking for HDAs are decidable.
	
	Theorem~\ref{th:main} induces also a construction, for an MSO sentence $\phi$ over $\iiPoms_{\le k}$, of $\phi\down$ such that $L(\phi\down) = L(\phi)\down$.
	This property fails when we consider non-interval pomsets. 
	However, the construction of $\phi\down$ is not efficient, as the current workflow is to transform $\phi$ to an HDA and then get $\phi\down$.
%
        We are wondering whether a more direct construction is possible.

	Our work could be continued by considering logics weaker than MSO.
	For example, the study of the expressive power of first order logic over $\iiPoms_{\le k}$ would be useful for model-checking purposes.
	In this regard, another operational model that would naturally arise is a class of $\omega$-HDAs: HDAs over infinite ipomsets.

	\bibliographystyle{plain}
	\bibliography{mybib}
	
\appendix
\newpage
\setcounter{lemma}0
\renewcommand*{\thelemma}{A.\arabic{lemma}}

\begin{center}
	{\Large Appendix}
\end{center}

This appendix is provided for the convenience of the referees. It should not be considered as part of the paper for publication.
It contains formulas and proofs omitted from the paper due to space constraints.

\section{From MSO to HDAs}
\label{sec:annexA}
Let $\texttt{Gclosed}(X_1,\dots,X_k)$ be the following:
\[
\textstyle \bigwedge_{i,j \le k} \forall x,y.\, 
x \in X_{i} \land (\glue {i} {j} (x,y) \lor \glue {j} {i} (y,x)) \implies y \in X_{j} \, .
\]
This formula is satisfied by a transitively closed interpretation of $X_1,\dots,X_k$ under $\bigvee_{i,j \le k} \glue i j$ where if $x \in X_i$ and $\glue i j (x,y)$ or $\glue j i (y,x)$ hold then $y \in X_j$. Hence:
\[
(x,i) \sim (y,j) \eqdef
\forall X_1, \ldots, X_k.\, 
( x \in X_i \land \texttt{Gclosed}(X_1,\dots,X_k)
)  
\implies y \in X_j \,.
\]
The formula above is thus satisfied by all $P_1\dots P_n \in \Omega_{\le k}^+$ for which there exist $l_1,l_2 \le n$  with $l_1<l_2$ and $e \in \bigcap_{l_1 \le \ell \le l_2} P_\ell$ such that $\evt(l_1,i) = \evt(l_2,j) = e$.
In particular $e$ must be an identity in all of $P_{l_1+1},\dots,P_{l_2-1}$ when $l_1 < l_2+1$.
Note that such words are not always coherent. 
The non-coherent words are, however, excluded by $\texttt{Coh}_k$.

\begin{example}
	\label{ex:msotohdaannex}
	Figure~\ref{fi:example} displays an ipomset $P$ and the coherent word $w_1 = P_1 \cdots P_7$ such that $P_1 * \dots * P_7 = P$.
	Let $e_1,\dots,e_4$ be the events of $P$ labeled respectively by the left $a$, the right $a$, $c$, and $d$ and let $p_1, \dots,p_7$ the positions on $w_1$ from left to right.
	Let $\nu(X_1) =\{p_5\}$ and $\nu(X_2) = \{p_2\}$.
	Observe that $w_1 \not\models_\nu \texttt{Gclosed}(X_1,X_2)$ since for example $\glue 2 2 (p_1,p_2)$ but $p_1 \not\in \nu(X_2)$.
	The smallest good valuation including the previous one is $\nu(X_1) = \{p_5,p_6\}$ and $\nu(X_2) = \{p_1,p_2\}$.

\end{example}

\section{From HDAs to MSO}

\begin{proof}[of Lem.~\ref{lem:acceptingrun}]
	
	For the direction from the left to the right,
	since $P$ is accepted by $X$ it admits a sparse accepting path $\alpha_1 * \dots * \alpha_n$ such that $\ev(\alpha_i) = P_i$.
	Let us define $\rup(e) = \alpha_{\xstart e}$  for all $xe \in P \setminus S_P$ and $\rdwn(e) = \alpha_{\xend e}$ for all $e \in P \setminus T_P$.
	Conditions \ref{e:ar.st}-\ref{e:ar.te1} are satisfied by definition, and conditions \ref{e:ar.mins}-\ref{e:ar.maxe} follow from the fact that 
	$\alpha$ is an accepting path.
	Finally, $\ev(\alpha_i) = P_{i}$ implies conditions \ref{e:ar.rup} and \ref{e:ar.rdwn}.
	
	Conversely, assume that there exist $\rup$ and $\rdwn$ satisfying
	conditions~\ref{e:ar.st}-\ref{e:ar.maxe}.
	From conditions~\ref{e:ar.st}-\ref{e:ar.te1}, we can define a path
	$\alpha = \alpha_1 * \dots * \alpha_n$ such that
	$\rup(e) = \alpha_{\xstart e}$  and $\rdwn(e) = \alpha_{\xend e}$.
	By conditions~\ref{e:ar.mins}-\ref{e:ar.maxe}, this path is accepting.
	It remains now to prove that
	$\ev(\alpha_i) = P_{i}$.	
	Assume that $i = \xstart e$ for some $e \in P \setminus S_P$. Then $\alpha_i = p_i \nearrow^A q_i$ is chosen by condition~\ref{e:ar.rup} such that $A$ is a conclist isomorphic to the conclist of all events started at position $i$ and $\ev(q_i)$ is isomorphic to the conclist of all $e'$ such that 
	$\xstart {e'} \le \xstart e = i < \xend {e'}$ that is the conclist of all events that are started at position $i$, started before $i$, or never started (sources), and which are not terminated yet.
	Thus $\ev(\alpha_i) = \starter {\ev(q_i)} A$ which is exactly  $P_i$. The arguments are similar when $i = \xend e$ for some $e \in P \setminus T_P$.
	\qed
\end{proof}

\begin{proof}[of Lem.~\ref{lem:compare}]
		We first define $\xend x < \xstart y$ as the formula $x < y$, together with
	$\xstart x < \xend y \equiv \lnot (\xend y < \xstart x)$.
	Because starters and terminators alternate in the sparse step decomposition,
	we can then let
	\begin{align*}
		\xstart x < \xstart y
		& \eqdef  \exists z.\, \xstart x < \xend z \land \xend z < \xstart y \\
		\xstart x = \xstart y
		& \eqdef \lnot (\xstart x < \xstart y) \land \lnot (\xstart y  < \xstart x) \land \lnot \srci(x) \land \lnot \srci(y) \\
		\xend x < \xend y
		& \eqdef \exists z.\, \xend x < \xstart z \land \xstart z < \xend y \\
		\xend x = \xend y
		& \eqdef \lnot (\xend x < \xend y) \land \lnot (\xend y  < \xend x) \land \lnot \tgti(x) \land \lnot \tgti(y)\\
		\min(\xstart x) & \eqdef  \lnot \srci(x) \land \lnot \exists y.\, \xend y < \xstart x\\
		\min(\xend x) & \eqdef  \lnot \tgti(x) \land \lnot \exists y.\, \xstart y < \xend x \\
		\max(\xstart x)  & \eqdef  \lnot \srci(x) \land \lnot \exists y.\, \xend y > \xstart x\\
		\max(\xend x) & \eqdef  \lnot \tgti(x) \land \lnot \exists y.\, \xstart y > \xend x \,.
	\end{align*}
	\qed
\end{proof}

We can also define 
	\begin{align*}
	\xstart y = \xend x + 1 \eqdef \
	& \xend x < \xstart y \land {} \\
	& \lnot \exists z. \xend x < \xstart z \land \xstart z < \xstart y \\
	\xend y = \xstart x + 1 \eqdef \
	& \lnot \srci(x) \land \lnot \tgti(y) \land \xstart x < \xend y \land {} \\
	& \lnot \exists z. \xstart x < \xend z \land \xend z < \xend y	\,.	
\end{align*}

	\begin{example}
	Continuing Ex.~\ref{ex:stend},
	observe that $P\models \xstart {e} = \xstart {e}$ for $e \in \{e_1,e_3,e_4\}$.
	This is not the case when $e = e_2$ since $e_2$ is a source neither when $x$ and $y$ are interpreted differently since there is no starter in $w_2$ starting two different events.
	We have also $P\models \xend {e} = \xend {e}$  for all $e \in P$ and $P\models \xend {e_2} = \xend {e_4}$.
	Let $\nu(x) = e_1$ and $\nu(y) = e_2$.
	Then $P \models_\nu \xstart x <  \xend y$ but $P \not\models_\nu \xend x = \xstart y + 1$ since $e_1,e_3$ are terminating before $e_2$.
	Nevertheless $P\models \xend {e} = \xstart {e_2} + 1$ for $e \in \{e_2,e_4\}$ and $P\models \xstart {e_4} = \xend {e_3} + 1$.
	We have also $P\models \xstart {e_1} < \xend {e}$ for $e \in \{e_2,e_3,e_4\}$.
	Finally we have $P\models \min(\xstart {e_1})$  and $P\models \max(\xend {e})$ for $e \in \{e_2,e_4\}$.
\end{example}

\begin{proof}[of Prop.~\ref{th:hdatomso}]
	Let $\upsteps{X} = \{(u_1,\nearrow^{A_1},v_1), \ldots, (u_m,\nearrow^{A_m},v_m)\}$  and
	$\downsteps{X} = \{(p_1,\searrow^{B_1},q_1), \ldots, (p_n,\searrow^{B_n},q_n)\}$.
	Then
	\begin{align*}
		\varphi \eqdef & \ (\exists x. \neg \srci(x) \lor \neg \tgti(x)) \implies  \exists X_1 \ldots \exists X_m.\, \exists Y_1 \ldots \exists Y_n.\, 
		\varphi_{0} \land \varphi_{1} \land \cdots \land \varphi_{\ref{e:ar.maxe}} \\
		&  \land (\forall y.\ \srci(y) \land \tgti(y)) \implies \bigvee_{\substack{p \in \bot_X \cap \top_X \\ \ev(p) \neq \emptyset}} \exists y_1,\dots,y_{|\ev(p)|}. \ev(p)(y_1,\dots,y_{|\ev(p)|}).
	\end{align*}
	where the second line of $\varphi$ is  satisfied by all the non-empty identities accepted by $X$  and
	\begin{itemize}
		\item $\varphi_0$ checks that the $X_i$'s and $Y_i$'s define labellings $\rup$ and 
		$\rdwn$, that is, each event belongs to at most one $X_i$ (is associated at most
		one upstep) and one $Y_i$ (is associated at most one downstep), and to no
		$X_i$ iff it is a source / no $Y_i$ iff it is a target:
		\begin{align*}
			\varphi_0 = \forall x.\,
			& \bigwedge_{1 \le i < j \le m} \lnot (x \in X_i \land x \in X_j) \\
			{} \land {} & \bigwedge_{1 \le i < j \le n} \lnot (x \in Y_i \land x \in Y_j) \\
			{} \land {} & \lnot \srci(x) \iff \bigvee_{1 \le i \le m} x \in X_i \\
			{} \land {} & \lnot \tgti(x) \iff \bigvee_{1 \le i \le n} x \in Y_i \, .
		\end{align*}
		
		\item $\varphi_1$ checks condition 1 from Lemma~\ref{lem:acceptingrun}:
		\[
		\varphi_1 = \forall x.\, \forall y.\, (\xstart x = \xstart y) \implies
		\bigwedge_{1 \le i \le m} x \in X_i \iff y \in X_i \, .
		\]
		
		\item $\varphi_2$ similarly checks condition 2 from Lemma~\ref{lem:acceptingrun}.
		
		\item $\varphi_3$ checks condition 3 from Lemma~\ref{lem:acceptingrun}:
		\[
		\varphi_3 = \forall x,y.\,
		\xstart y = \xend x + 1 \implies \bigvee_{u_i = q_j} y \in X_i \land x \in Y_j \, .
		\]
		
		\item $\varphi_4$ similarly checks condition 4 from Lemma~\ref{lem:acceptingrun}.
		
		\item $\varphi_{\ref{e:ar.rup}}$ checks condition \ref{e:ar.rup} from Lemma~\ref{lem:acceptingrun}:
		\begin{align*}
			\varphi_{\ref{e:ar.rup}} = {}
			& \bigwedge_{1 \le i \le m} \forall x.\, x \in X_i \implies \exists x_1, \ldots, x_{|A_i|},
			y_1,\ldots,y_{|\ev(v_i)|}.\, \\
			& \Big( \forall y.\, \xstart y = \xstart x \iff \bigvee_{1 \le i \le |A_i|} y = x_i \Big)
			\land A_i(x_1,\ldots,x_{|A_i|}) \\
			{} \land {}
			&   \Big( \forall y.\, \xstart y \le \xstart x < \xend y \iff 
			\bigvee_{1 \le i \le |\ev(v_i)|} y = y_i \Big) \\
			{} \land {}
			& \ev(v_i)(y_1,\ldots,y_{|\ev(v_i)|})
		\end{align*}
		where for a conclist $A = \begin{bmatrix} a_1  \\ \vdots  \\ a_k  \end{bmatrix}$,
		\[
		A(x_1,\ldots,x_k) = 
		\bigwedge_{1 \le i < k} x_i \evord x_{i+1} \land
		\bigwedge_{1 \le i \le k} a_i(x_i) \, .
		\]
		\item $\varphi_{\ref{e:ar.rdwn}}$ similarly checks condition \ref{e:ar.rdwn} from Lemma~\ref{lem:acceptingrun}.
		
		\item $\varphi_{\ref{e:ar.mins}}$ checks condition \ref{e:ar.mins} from Lemma~\ref{lem:acceptingrun}:
		\begin{align*}
			\varphi_{\ref{e:ar.mins}} =
			\forall x.\, \min(\xstart x)\implies
			\bigvee_{u_i \in \bot_X} x \in X_i  \, .
		\end{align*}
		\item $\varphi_8$, $\varphi_9$ and $\varphi_{\ref{e:ar.maxe}}$ similarly check conditions
		8, 9 and \ref{e:ar.maxe} of Lemma~\ref{lem:acceptingrun}. 
		We have $L(\phi) = L(\mathcal{H}) \setminus \{\id_\emptyset\}$.
		If $\id_\emptyset \in L(\mathcal{H})$ then $L(\mathcal{H}) = L(\phi \lor  \neg \exists x.\ \texttt{true})$. \qed
	\end{itemize}
\end{proof}

\end{document}